\crefname{equation}{Eq.}{Eqs.}
\crefname{section}{Sec.}{Secs.}
\crefname{definition}{Definition}{Definitions}
\crefname{proposition}{Proposition}{Propositions}
\crefname{lemma}{Lemma}{Lemmas}
\crefname{theorem}{Theorem}{Theorems}
\crefname{corollary}{Corollary}{Corollaries}
\crefname{conjecture}{Conjecture}{Conjectures}
\crefname{claim}{}{Claims}
\crefname{example}{Example}{Examples}
\newtheorem{definition}{Definition}
\newtheorem{proposition}[definition]{Proposition}
\newtheorem{lemma}[definition]{Lemma}
\newtheorem{theorem}[definition]{Theorem}
\newtheorem{corollary}[definition]{Corollary}
\newtheorem*{remark}{Remark}
\newtheorem{example}[definition]{Example}
\newenvironment{manualtheorem}[1]{%
  \manualtheoreminner
}{\endmanualtheoreminner}
\newcommand{\proj}[1]{|{#1}\rangle \langle {#1}|}
\newcommand{\nc}{\newcommand}
\def\bpf{\begin{proof}}
\def\epf{\end{proof}}
\def\bea{\begin{eqnarray}}
\def\eea{\end{eqnarray}}
\def\beq{\begin{equation}}
\def\eeq{\end{equation}}
\def\bal{\begin{aligned}}
\def\eal{\end{aligned}}
\def\bma{\begin{pmatrix}}
\def\ema{\end{pmatrix}}
\def\dg{\dagger}
\def\ox{\otimes}
\def\lin{\mathop{\rm span}}
\def\I{\mathds{I}}
\def\diag{\rm diag}
\def\a{\alpha}
\def\t{\theta}
\nc{\bbA}{{\mathbb A}}  \nc{\bbB}{{\mathbb B}}  \nc{\bbC}{{\mathbb C}}
\nc{\bbD}{{\mathbb D}}  \nc{\bbE}{{\mathbb E}}  \nc{\bbF}{{\mathbb F}}
\nc{\bbG}{{\mathbb G}}  \nc{\bbH}{{\mathbb H}}  \nc{\bbI}{{\mathbb I}}
\nc{\bbJ}{{\mathbb J}}  \nc{\bbK}{{\mathbb K}}  \nc{\bbL}{{\mathbb L}}
\nc{\bbM}{{\mathbb M}}  \nc{\bbN}{{\mathbb N}}  \nc{\bbO}{{\mathbb O}}
\nc{\bbP}{{\mathbb P}}  \nc{\bbQ}{{\mathbb Q}}  \nc{\bbR}{{\mathbb R}}
\nc{\bbS}{{\mathbb S}}  \nc{\bbT}{{\mathbb T}}  \nc{\bbU}{{\mathbb U}}
\nc{\bbV}{{\mathbb V}}  \nc{\bbW}{{\mathbb W}}  \nc{\bbX}{{\mathbb X}}
\nc{\bbY}{{\mathbb Y}}  \nc{\bbZ}{{\mathbb Z}}  
\nc{\bA}{{\bf A}}  \nc{\bB}{{\bf B}}  \nc{\bC}{{\bf C}}
\nc{\bD}{{\bf D}}  \nc{\bE}{{\bf E}}  \nc{\bF}{{\bf F}}
\nc{\bG}{{\bf G}}  \nc{\bH}{{\bf H}}  \nc{\bI}{{\bf I}}
\nc{\bJ}{{\bf J}}  \nc{\bK}{{\bf K}}  \nc{\bL}{{\bf L}}
\nc{\bM}{{\bf M}}  \nc{\bN}{{\bf N}}  \nc{\bO}{{\bf O}}
\nc{\bP}{{\bf P}}  \nc{\bQ}{{\bf Q}}  \nc{\bR}{{\bf R}}
\nc{\bS}{{\bf S}}  \nc{\bT}{{\bf T}}  \nc{\bU}{{\bf U}}
\nc{\bV}{{\bf V}}  \nc{\bW}{{\bf W}}  \nc{\bX}{{\bf X}}
\nc{\bY}{{\bf Y}}  \nc{\bZ}{{\bf Z}}  
\nc{\cA}{{\cal A}}  \nc{\cB}{{\cal B}}  \nc{\cC}{{\cal C}}
\nc{\cD}{{\cal D}}  \nc{\cE}{{\cal E}}  \nc{\cF}{{\cal F}}
\nc{\cG}{{\cal G}}  \nc{\cH}{{\cal H}}  \nc{\cI}{{\cal I}}
\nc{\cJ}{{\cal J}}  \nc{\cK}{{\cal K}}  \nc{\cL}{{\cal L}}
\nc{\cM}{{\cal M}}  \nc{\cN}{{\cal N}}  \nc{\cO}{{\cal O}}
\nc{\cP}{{\cal P}}  \nc{\cQ}{{\cal Q}}  \nc{\cR}{{\cal R}}
\nc{\cS}{{\cal S}}  \nc{\cT}{{\cal T}}  \nc{\cU}{{\cal U}}
\nc{\cV}{{\cal V}}  \nc{\cW}{{\cal W}}  \nc{\cX}{{\cal X}}
\nc{\cY}{{\cal Y}}  \nc{\cZ}{{\cal Z}}  
\def\ox{\otimes}
\def\dg{\dagger}
\def\lin{\mathop{\rm Span}}
\begin{document}


\title{Decidabilities of local unitary equivalence for entanglement witnesses and states} 

\author{Yi Shen}
\email[]{yishen@jiangnan.edu.cn}
\affiliation{School of Science, Jiangnan University, Wuxi Jiangsu 214122, China}

\author{Lin Chen}
\email[Corresponding author: ]{linchen@buaa.edu.cn}
\affiliation{School of Mathematical Sciences, Beihang University, Beijing 100191, China}

\date{\today} 

\begin{abstract}
The problem of determining whether two states are equivalent by local unitary (LU) operations is important for quantum information processing. In this paper we propose an alternative perspective to study this problem by comparing the decidabilities of LU equivalence (also known as LU decidabilities for short) between entanglement witnesses and states. We introduce a relation between sets of Hermitian operators in terms of the LU decidability. Then we compare the LU decidability for the set of entanglement witness to those decidabilities for several sets of states, and establish a hierarchy on LU decidabilities for these sets. Moreover, we realize that the simultaneous LU (SLU) equivalence between tuples of mutually orthogonal projectors is crucial to LU equivalent operators. We reveal by examples that for two tuples of projectors, the partial SLU equivalence cannot ensure the overall SLU equivalence. Generally, we present a necessary and sufficient condition such that two tuples of states are SLU equivalent.
\end{abstract}


\maketitle


\section{Introduction}
\label{sec:intro}

At the heart of quantum mechanics lies the concept of entanglement, a phenomenon that distinguishes quantum systems from classical ones, and serves as a key resource for various information processing protocols. An efficient way to characterize entanglement is to classify entangled states under several equivalence relations relying on different kinds of local operations such as local unitary (LU) operations \cite{pslu-2010}, local operations and classical communication (LOCC) \cite{3qubitinequiv2000}, and stochastic LOCC (SLOCC) \cite{MCSLOCC2011}. The fundamental one is LU equivalence, which refers to the equivalence relation between two multipartite states if they can be transformed into each other by applying LU operations. This equivalence plays an essential role in understanding the structures and properties of entanglement, as it defines an equivalence class of states that share identical entanglement characteristics under local operations. In addition, since the process of LU transformation is reversible, the LU classification of states is thus crucial for determining the resource interconvertibility among different entangled states \cite{rvqrt2015}, and has significant implications for quantum computing \cite{luqc-2022}, communication \cite{qcLOCC2021} and metrology \cite{metrology2016}.

For bipartite pure states, the Schmidt decomposition provides an effective technique to classify them under LU equivalence. Although it is lack of a generalization of the Schmidt decomposition for multipartite scenarios, the complete classification for multipartite pure states under LU equivalence has been progressively developed \cite{3qubitlu2000,luequiv2010,mpsinequivlu2012}. 
Despite the breakthroughs on the LU classification of pure states, the problem of characterizing LU orbits of mixed states remains challenging. A possible way to attack this problem is to discover algebraic invariants \cite{Makhlin2002} under LU transformations which serve as necessary conditions for determining whether two states are LU equivalent. Studying along this way has been of great interest in recent years \cite{msinv2013,mqbitinv2014,mqbitiff2015,mqbitinv2015,3qbitinv2017,msinv2024,lininv2024}.
However, many open questions remain, especially regarding the completeness of invariant sets and the practical implementation of classification algorithms for higher-dimensional systems.

Indeed, the classification under LU equivalence is not confined to states. The study has been extended to several other objects in entanglement theory, e.g. the non-local unitary operations \cite{decugate15,ugatesyi22} and the entanglement witnesses (EWs) \cite{ewreview2014,rew2024}. The characterizations of non-local unitary operations and EWs under LU equivalence are also of physical significance \cite{ep2016-2,mep2025,inertia24}. From an operational perspective, to implement a complicated non-local unitary gate or EW, an efficient approach is to first implement a simpler one which is LU equivalent to the target, assisted by a sequence of LU operations \cite{eff-nug-2015,lewo2020}. An entangled state with NPT (non-positive partial transpose) property can be linked to an EW by virtue of the partial transposition. This connection motivates us to jointly investigate the LU equivalence of EWs and that of states. Since EWs and states are both expressed by Hermitian operators which have spectral decompositions, some EWs and entangled states can be mutually converted by adjusting eigenvalues. Thus, we propose an alternative perspective on the LU equivalence of states by connecting with the LU equivalence of EWs. 


In this paper we introduce the concept of LU decidable sets where any two bipartite operators can be decided as LU equivalent or LU inequivalent. By comparing the LU decidabilities between two sets, we further introduce a relation in terms of LU decidabilities. For two sets $\cS_1,\cS_2$ of bipartite operators, we use $\cS_1\preceq\cS_2$ to indicate that the LU decidability of $\cS_1$ is not stronger than that of $\cS_2$, which alternatively means that $\cS_1$ is LU decidable if $\cS_2$ is LU decidable. By virtue of this relation we connect the LU decidability of the set of EWs with those of several sets of states. Based on the comparative results we establish a hierarchy on LU decidabilities for these concerning sets. To visualize this hierarchy, we illustrate it by Figure \ref{fig:1}. In Theorem \ref{cor:drelation} we build the relations among the set of EWs and the sets of states divided by the PPT (positive-partial-transpose) criterion \cite{ppte1997}. In order to more accurately evaluate the LU decidability of the set of EWs, we figure out a set of states each of which satisfies that there is an eigenspace containing no product vector, and show by Theorem \ref{thm:mainrelation} that the LU decidability of such set is equal to that of the set of EWs. Furthermore, we extend our study to a larger set of states each of which satisfies that there exists some eigenspace not spanned by product vectors. We conclude in Proposition \ref{prop:kprodv} that the LU decidability of this larger set is equal to that of the set above-mentioned. By analyzing the spectral decompositions of LU equivalent operators, we realize that the SLU equivalence between tuples of mutually orthogonal projectors plays an essential role. In light of this, we consider the conditions ensuring the SLU equivalence between such two tuples. A necessary condition requires that any parts of such two tuples have to be SLU equivalent, which we call the partial SLU equivalence. By Example \ref{ex:slu-c} we indicate that the partial SLU equivalence cannot ensure the SLU equivalence between the whole tuples of mutually orthogonal projectors. In a more general context, we propose a necessary and sufficient condition in Theorem \ref{le:slu} such that two tuples of states are SLU equivalent.

\begin{figure*}[htbp]
\centering
\includegraphics[width=0.9\textwidth]{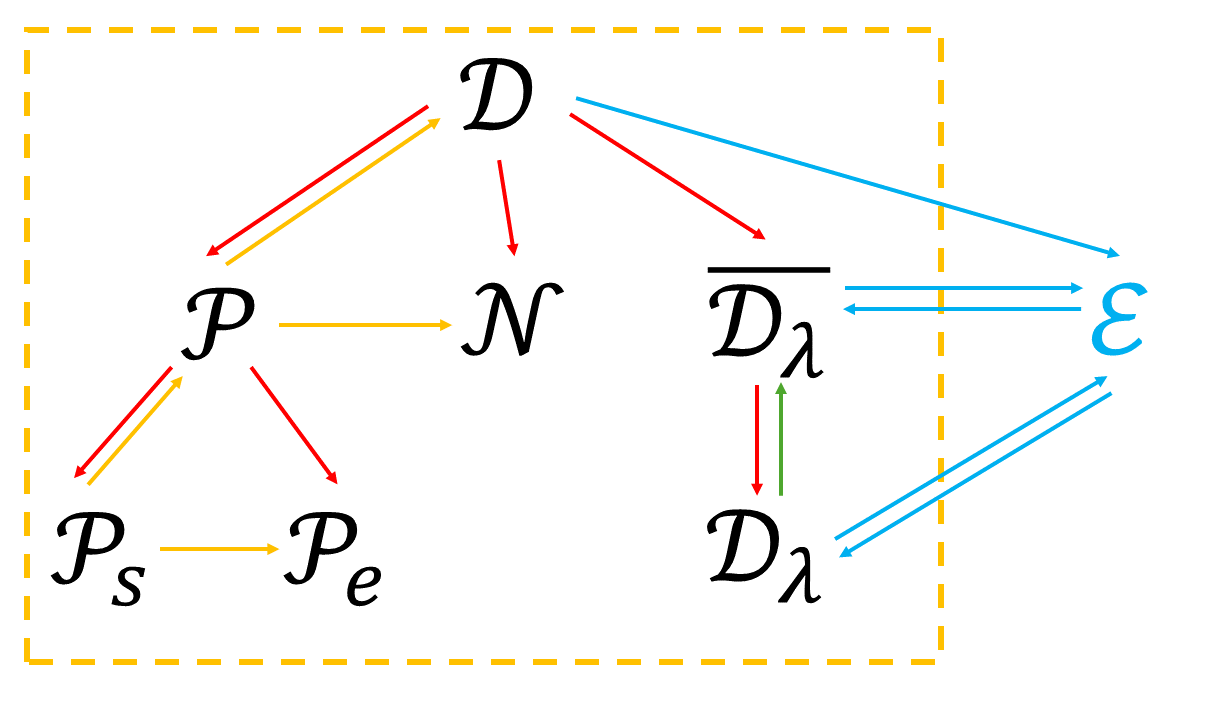}
\caption{This figure depicts the relations in terms of the LU decidability among several sets. Denote by $\cD,\cN,\cP,\cP_e,\cP_s,\cE$, the sets of all states, NPT states, PPT states, PPT entangled states, separable states, and EWs respectively, supported on the same bipartite Hilbert space. We refer to $\cD_\lambda$ and $\overline{\cD_\lambda}$ respectively as the set of bipartite states satisfying that at least one eigenspace contains no product vector, and the set of bipartite states satisfying that at least one eigenspace is not spanned by product vectors. The arrow ``$\rightarrow$'' represents the notation ``$\succeq$'', which means the LU decidability of the set at the end point of the arrow is not stronger than that of the set at the initial point. The arrows in red are concluded from the inclusion relation between the two sets. The arrows in yellow are derived by Theorem \ref{cor:drelation}, the arrows in blue are related to Theorem \ref{thm:mainrelation}, and the arrow in green is derived by Proposition \ref{prop:kprodv}. The arrows in the dashed box reveal the relations among the sets of states, and the arrows in blue establish the connection between $\cE$ and some sets of states.} 
\label{fig:1}
\end{figure*}

The remainder of this paper is organized as follows: In Sec. \ref{sec:pre} we clarify the notations and definitions, and present some known results as necessary preliminaries. In Sec. \ref{sec:EW-orbits} we establish connections of the LU decidabilities among the set of EWs and several sets of states, which also reveals a hierarchy of these sets in terms of the LU decidability. In Sec. \ref{sec:rangelu} we investigate the SLU equivalence between tuples of projectors, which is quite essential to the problem of determining whether two arbitrary states are LU equivalent or not.
Finally, the concluding remarks are given in Sec. \ref{sec:con}.

\section{Preliminaries}
\label{sec:pre}

In this section we perform mathematical preparations to make our discussions clear and concise. 
We first clarify some common notations. The LU equivalence is the core concept in this paper, which implies the mutual transformation of bipartite operators by LU operations. Given two bipartite Hermitian operators $M$ and $N$, we use $M\sim N$ to represent that they are LU equivalent, specifically defined by $M=(U\ox V)N(U\ox V)^\dg$ for some LU operator $U\ox V$. Conversely, we may use $M\nsim N$ to represent that they are LU inequivalent, which means there is no LU operation such that $M, N$ can be transformed to each other. One can analagously formulate the LU equivalence of bipartite subspaces by $X=(U\ox V)Y$ for two subspaces $X,Y$ of $\bbC^m\ox\bbC^n$. We also use $X\sim Y$ and $X\nsim Y$ to represent that such two subspaces are LU equivalent and LU inequivalent respectively. 

For any Hermitian operator $H$, its spectral decomposition reads as $H=\sum_j \lambda_j P_j$, where $\lambda_j$'s are distinct eigenvalues and $P_j$'s are orthogonal projectors. Unless specified otherwise, the eigenvalues in the spectral decomposition are sorted by descending order. We shall use the shorthand notation $\lambda_j$-eigenspace to represent the eigenspace associated with the $j$-th eigenvalue $\lambda_j$.
Unless stated otherwise, the states addressed in this paper are assumed as non-normalized, i.e., the traces of states are just positive.
Denote by $\cD,\cN,\cP,\cP_e,\cP_s,\cE$ the sets of all states, NPT states, PPT states, PPT entangled states, separable states, and EWs respectively, supported on the same bipartite Hilbert space. We shall use $\cR(\rho)$ to represent the range of positive semidefinite $\rho$.

Second, based on LU equivalence, we introduce two kinds of relations between sets of bipartite Hermitian operators as follows. Definition \ref{def:decide} establishes a hierarchy on the decidability of LU equivalence. Definition \ref{def:slu} characterizes whether a set of bipartite Hermitian operators can be correspondingly transformed to another set by a common LU operator.

\begin{definition}
\label{def:decide}
(i) For a given set $\cS$ of bipartite Hermitian operators, if the LU equivalence or LU inequivalence between any two operators in $\cS$ is known, the set $\cS$ is called LU decidable. 

(ii) For two given sets $\cS_1,\cS_2$ of bipartite Hermitian operators, if $\cS_1$ is LU decidable when $\cS_2$ is LU decidable, then the LU decidability of $\cS_1$ is not stronger than that of $\cS_2$, denoted by $\cS_1\preceq\cS_2$.

(iii) We call the LU decidabilities of $\cS_1$ and $\cS_2$ equal, denoted by $\cS_1\xlongequal{d}\cS_2$, if both two relations $\cS_1\preceq\cS_2$ and $\cS_2\preceq\cS_1$ hold simultaneously.
\end{definition}

By definition one can verify that the relation ``$\preceq$'', admits the transitivity. That is, $\cS_1\preceq\cS_3$ holds if both relations $\cS_1\preceq\cS_2$ and $\cS_2\preceq\cS_3$ hold. 
Furthermore, one can check that the relation ``$\xlongequal{d}$'' is an equivalence relation.

Also, the relation of LU decidability is linked to the inclusion relation. We give some remarks on this link as follows. 
For two sets with an inclusion relation, e.g. $\cS_1\subseteq\cS_2$, their relation of LU decidability naturally admits as $\cS_1\preceq\cS_2$. For such two sets, if it is further known that $\cS_1\xlongequal{d}\cS_2$, one can verify that for any set $\cS_3$ with $\cS_1\subseteq\cS_3\subseteq\cS_2$, the equivalence relation $\cS_1\xlongequal{d}\cS_2\xlongequal{d}\cS_3$ holds.
Therefore, it establishes a hierarchy of sets of bipartite Hermitian operators on the LU decidability. 

It is necessary to simultaneously consider the LU equivalence of multiple pairs of bipartite Hermitian operators, in order to determine the relation of LU decidability for two given sets. Therefore, the following introduced simultaneous LU equivalence is essential to study the hierarchy on LU decidability.

\begin{definition}
\label{def:slu}
Suppose that $\cS_1=(M_1,\cdots,M_k)$ and $\cS_2=(N_1,\cdots,N_k)$ are two $k$-tuples of bipartite Hermitian operators supported on $\bbC^m\ox\bbC^n$. We call that $\cS_1$ is simultaneously LU (SLU) equivalent to $\cS_2$, denoted by $\cS_1\sim_s\cS_2$, if there exists a common LU operator $U\ox V$ such that $(U\ox V)M_i(U\ox V)^\dg=N_i$ for each $1\leq i\leq k$.
\end{definition}

The definition above can be directly extended to the context of bipartite subspaces. We here do not repeat the definition of SLU equivalence for bipartite subspaces.

Finally, we present some konwn lemmas which are necessary to derive our main results. Lemma \ref{le:noise-sep} indicates that the entanglement of a state will vanish completely by adding large enough noise. Lemma \ref{le:spectral-dec} proposes a necessary and sufficient condition in terms of SLU equivalence given by Definition \ref{def:slu}, such that two bipartite Hermitian operators are LU equivalent.

\begin{lemma}
\label{le:noise-sep}
Given a multipartite state $\rho$, there is a large enough coefficient $x>0$ such that $\rho+x\I$ is a separable state, where $\I$ is supported on the same space as $\rho$.
\end{lemma}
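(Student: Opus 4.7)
The plan is to invoke the classical fact that there is an open ball of separable states around the maximally mixed state, and then to show that after normalizing by its trace, $\rho+x\I$ lies inside this ball for all sufficiently large $x$.

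First I would recall the \emph{separable-ball theorem}: for a multipartite Hilbert space of total dimension $d$, there exists $\varepsilon>0$ such that every normalized state $\sigma$ satisfying $\|\sigma-\I/d\|_2\le\varepsilon$ is separable. In the bipartite setting this is the well-known Gurvits--Barnum ball, and it admits analogous extensions to arbitrary multipartite systems.

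Next I would form $\sigma_x:=(\rho+x\I)/(\mathrm{Tr}\rho+xd)$ and compute
\begin{equation*}
\sigma_x-\frac{\I}{d}=\frac{1}{\mathrm{Tr}\rho+xd}\!\left(\rho-\frac{\mathrm{Tr}\rho}{d}\,\I\right),
\end{equation*}
so that $\|\sigma_x-\I/d\|_2\to 0$ as $x\to\infty$. Hence for $x$ exceeding some threshold $x_0$, the normalized operator $\sigma_x$ sits inside the separable ball and is therefore separable. Because separability (in the non-normalized convention adopted in Sec.~\ref{sec:pre}) is preserved under multiplication by any positive scalar, the identity $\rho+x\I=(\mathrm{Tr}\rho+xd)\,\sigma_x$ yields separability of $\rho+x\I$ itself for every $x\ge x_0$.

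The main obstacle is really just the invocation of the separable-ball theorem, which is nontrivial but classical; once it is granted, the rest of the argument is a one-line continuity estimate. The only subtleties to be careful about are employing the genuinely multipartite version of the ball rather than only its bipartite Gurvits--Barnum form, and reconciling the statement of the ball (phrased for normalized states) with the unnormalized convention for states used throughout the paper.
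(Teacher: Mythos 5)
Your argument is correct. Note that the paper itself states this lemma as a known preliminary and gives no proof, so there is nothing to compare against line by line; your route --- normalize $\rho+x\I$, observe that it converges to the maximally mixed state as $x\to\infty$, invoke the multipartite separable ball around $\I/d$ (Gurvits--Barnum and its multipartite extension), and then undo the normalization using the paper's unnormalized convention for states --- is exactly the standard justification the authors are implicitly relying on, and your trace computation and the scaling step are both sound.
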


\begin{lemma}
\label{le:spectral-dec}
Let $H,K$ be two bipartite Hermitian operators. Their spectral decompositions read as $H=\sum_{j=1}^m\lambda_j M_j$ and $K=\sum_{j=1}^n\mu_j N_j$, where $\lambda_j's$ and $\mu_j's$ are non-zero eigenvalues of $H$ and $K$ respectively in descending order. Then $H$ is LU equivalent to $K$, i.e. $H\sim K$, if and only if $m=n$, and for each $j$, $\lambda_j=\mu_j$ and $M_j\sim N_j$ by a common LU operator.
\end{lemma}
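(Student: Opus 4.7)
The plan is to prove the two directions separately, relying on the uniqueness of the spectral decomposition and on the elementary fact that unitary conjugation preserves the full spectral data, namely eigenvalues with multiplicities together with the corresponding eigenprojectors up to the conjugation.

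For sufficiency, assuming $m=n$, $\lambda_j=\mu_j$ for every $j$, and that there exists a common LU operator $W=U\ox V$ with $W M_j W^\dg=N_j$ for all $j$, linearity of conjugation gives
\begin{equation*}
W H W^\dg=\sum_{j=1}^m\lambda_j\, W M_j W^\dg=\sum_{j=1}^m\mu_j N_j=K,
\end{equation*}
so $H\sim K$. This direction needs no further input beyond Definition~\ref{def:slu}.

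For necessity, suppose $H\sim K$ is realized by some LU operator $W=U\ox V$, that is, $W H W^\dg=K$. Because $W$ is unitary, conjugation is a similarity transformation and therefore preserves eigenvalues with multiplicities. Hence the lists of nonzero eigenvalues of $H$ and $K$ coincide as multisets; since both lists are sorted in descending order, we obtain $m=n$ and $\lambda_j=\mu_j$ for every $j$. Moreover, $W$ maps the $\lambda_j$-eigenspace of $H$ bijectively onto the $\lambda_j$-eigenspace of $K$, so the orthogonal projector onto the latter equals $W M_j W^\dg$. By the uniqueness of the spectral decomposition associated with pairwise distinct eigenvalues, this projector must coincide with $N_j$, giving $W M_j W^\dg=N_j$ for every $j$. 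Since a single operator $W$ realizes all these equalities at once, the tuples $(M_1,\dots,M_m)$ and $(N_1,\dots,N_m)$ are SLU equivalent in the sense of Definition~\ref{def:slu}, which is exactly the ``common LU operator'' clause of the statement.

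The only delicate point I anticipate is the bookkeeping around the fact that only nonzero eigenvalues appear in the spectral decompositions. Because $W$ is unitary it carries $\ker H$ onto $\ker K$, so truncating the zero eigenvalue introduces no asymmetry between the two sides: the omitted zero-eigenprojectors transform correctly under $W$, and their absence does not affect the matching of the nonzero spectral data. With this observation in place, I do not foresee a substantial obstacle; the proof reduces to uniqueness of spectral decomposition plus the invariance of eigenspaces under unitary conjugation.
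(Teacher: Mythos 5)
Your proof is correct. The paper states Lemma~\ref{le:spectral-dec} as a known preliminary without supplying a proof, and your argument — sufficiency by linearity of conjugation, necessity by invariance of the spectrum and eigenspaces under unitary conjugation plus uniqueness of the spectral decomposition, with the kernel bookkeeping handled explicitly — is exactly the standard derivation the paper implicitly relies on.
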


It is worthy to note that the SLU equivalence of the two $m$-tuples $(M_j)_{j=1}^m,(N_j)_{j=1}^m$ is indispensable for Lemma \ref{le:spectral-dec}. In other words, for two given pairs of bipartite projectors $(A_1,B_1), (A_2,B_2)$, and two real coefficients $x,y$, the two operators formulated by $\alpha_i:=xA_i+yB_i$ for $i=1,2$ may not be LU equivalent, even if $A_1\sim A_2$ and $B_1\sim B_2$. We explain it by example. Let $A_1=A_2=\ketbra{\psi}$ with $\ket{\psi}=\cos\theta\ket{00}+\sin\theta\ket{11}$ for $\theta\in(0,\frac{\pi}{2})$. Then we have $A_1\sim A_2$. Let $B_1=\ketbra{\Psi^+}$ and $B_2=\ketbra{\Psi^-}$ where $\ket{\Psi^+}=\frac{1}{\sqrt{2}}(\ket{01}+\ket{10})$ and $\ket{\Psi^-}=\frac{1}{\sqrt{2}}(\ket{01}-\ket{10})$. Then we have that $B_1$ and $B_2$ are LU equivalent by the LU operator as $\I_2\otimes\diag(-1,1)$. However, one can verify that the eigenvalues of $\alpha_1$ are not the same as those of $\alpha_2$. Thus, $\a_1$ and $\a_2$ are not LU equivalent. It is also known by Lemma \ref{le:spectral-dec} that $(A_1,B_1)$ is not SLU equivalent to $(A_2,B_2)$.

The following lemma characterizes the unitary matrices which maintain diagonal matrices invariant under the unitary similarity.
\begin{lemma}
\label{le:diag-inv}
Let $\Lambda=\oplus_{j=1}^n\lambda_j\I_{d_j}$ be a diagonal matrix. The unitary matrix $U$ such that $U\Lambda U^\dg=\Lambda$ has to be a direct sum of unitary matrices, namely $U=\oplus_{j=1}^n U_j$ where $U_j$ is supported on $\mathbb C^{d_j}$.
\end{lemma}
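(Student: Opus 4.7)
The plan is to rewrite the hypothesis as a commutation relation and then exploit the distinct spectrum of $\Lambda$ to force a block-diagonal structure on $U$. Since $\Lambda$ is invertible only if all $\lambda_j \neq 0$, I will not use invertibility; instead, multiplying $U\Lambda U^\dg = \Lambda$ on the right by $U$ and using $U U^\dg = \I$ gives $U\Lambda = \Lambda U$, which is the key identity. Implicit in the statement is that the $\lambda_j$'s are distinct (otherwise one would merge the corresponding blocks into a single $\lambda_j \I_{d_j}$); I will proceed under this convention.

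Next, I would decompose $U$ into blocks conformal with the eigenspace decomposition $\mathbb{C}^d = \oplus_{j=1}^n \mathbb{C}^{d_j}$ of $\Lambda$, writing $U = (U_{ij})_{i,j=1}^n$ where $U_{ij}$ has size $d_i \times d_j$. The commutation identity $U\Lambda = \Lambda U$ then reads block-by-block as $\lambda_j\, U_{ij} = \lambda_i\, U_{ij}$ for all $i,j$. Since the $\lambda_j$'s are pairwise distinct, whenever $i \neq j$ this forces $U_{ij} = 0$. Therefore $U = \oplus_{j=1}^n U_j$ with $U_j := U_{jj}$ acting on $\mathbb{C}^{d_j}$.

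Finally, the unitarity of $U$ transfers directly to each diagonal block: from $U^\dg U = \I$, computing the $(j,j)$-th block yields $U_j^\dg U_j = \I_{d_j}$, and similarly $U_j U_j^\dg = \I_{d_j}$ from $U U^\dg = \I$. Hence each $U_j$ is unitary on $\mathbb{C}^{d_j}$, completing the proof.

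I do not anticipate any real obstacle, since the argument amounts to the standard fact that the commutant of a diagonal matrix with distinct eigenvalues (when each eigenvalue appears with multiplicity $d_j$) is the block-diagonal algebra $\bigoplus_j M_{d_j}(\mathbb{C})$. The only subtle point worth flagging explicitly in the write-up is the distinctness assumption on the $\lambda_j$'s, which should be stated for clarity so that the block structure of $\Lambda$ is uniquely determined and the conclusion on $U$ is meaningful.
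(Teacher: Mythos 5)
Your proof is correct. The paper states Lemma \ref{le:diag-inv} as a known preliminary fact and gives no proof of its own, so there is nothing to compare against; your argument --- passing from $U\Lambda U^\dg=\Lambda$ to the commutation relation $U\Lambda=\Lambda U$, decomposing $U$ into blocks conformal with the eigenspaces of $\Lambda$, using pairwise distinctness of the $\lambda_j$ to kill the off-diagonal blocks, and reading off unitarity of each diagonal block from $U^\dg U=UU^\dg=\I$ --- is the standard commutant computation and is complete. Your remark that the distinctness of the $\lambda_j$ must be assumed (else the block structure of $\Lambda$, and hence the claimed form of $U$, is not well defined) is a worthwhile clarification that the paper's statement leaves implicit.
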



\section{relations on LU decidabilities among the set of EWs and sets of states}
\label{sec:EW-orbits}

EWs provide a powerful tool to detect entanglement. Due to the inherent connection between LU operations and entanglement characteristics, it inspires us to jointly consider the LU orbits of EWs and the LU orbits of states. Based on the relation in terms of the LU decidability introduced by Definition \ref{def:decide}, we shall focus on such relations among sets of bipartite states and the set of EWs. Our discussion in this section not only establishes a connection of the two problems on the LU decidabilities respectively for states and EWs, but also reveals a hierarchy of the sets of states and that of EWs. To better understand these relations, one may refer to \ref{fig:1} which has visualized the main results in this section.

In Theorem \ref{cor:drelation} we compare the LU decidabilities among the set of EWs and some common sets of bipartite states, namely $\cN,\cP,\cP_e,\cP_s$. In Theorem \ref{thm:mainrelation} we introduce a subset $\cD_{\lambda}\subset\cD$, and show that the LU decidability of $\cD_{\lambda}$ is equal to that of $\cE$. Such two main results in this section indicates that the LU decidabilities of two sets including different types of Hermitian operators could be related. It sheds new light on the problem of verifying the LU equivalence or inequivalence between two arbitrary states by introducing other types of Hermitian operators, e.g. EWs. Then we extend our study to a larger set of states, denoted by $\overline{\cD_{\lambda}}$, which includes the set $\cD_{\lambda}$, and show by Proposition \ref{prop:kprodv} that the two LU decidabilities of $\cD_{\lambda}$ and $\overline{\cD_{\lambda}}$ are equal.
Finally, by Proposition \ref{le:eigrange} we show two special cases where the LU equivalence between two normalized states can be determined by specific eigenvalues of the partial transposes of such two states.

The PPT criterion is a celebrated tool to detect entanglement, which divides $\cD$ into two non-intersected parts, namely $\cN,\cP$, and tells that each state of $\cN$ is entangled. The set $\cP$ can be further divided to two disjoint subsets $\cP_e$ and $\cP_s$. Here, we first discuss the relations in terms of the LU decidability among these common sets of states, $\cN,\cP,\cP_e,\cP_s$, and the set of EWs $\cE$.

\begin{theorem}
\label{cor:drelation}
The decidabilities of LU equivalence for the set of EWs and some common sets of states admit the following chain of relations:
\beq
\label{eq:chain-1}
\cN\preceq \cE \preceq \cD \xlongequal{d} \cP_s \xlongequal{d} \cP \succeq \cP_e.
\eeq
\end{theorem}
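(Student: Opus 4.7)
The plan is to establish each link in the chain separately and then assemble them. Several links follow directly from the inclusions $\cP_s\subseteq\cP\subseteq\cD$ and $\cP_e\subseteq\cP$, combined with the remark following Definition \ref{def:decide}: an inclusion $\cS_1\subseteq\cS_2$ always yields $\cS_1\preceq\cS_2$, and whenever $\cS_1\xlongequal{d}\cS_2$ holds with $\cS_1\subseteq\cS_3\subseteq\cS_2$ one gets $\cS_1\xlongequal{d}\cS_2\xlongequal{d}\cS_3$. Thus $\cP\succeq\cP_e$ is immediate, and once $\cD\xlongequal{d}\cP_s$ is established, the three-way equality $\cD\xlongequal{d}\cP\xlongequal{d}\cP_s$ follows from the sandwich $\cP_s\subseteq\cP\subseteq\cD$. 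So the only three substantial reductions are $\cN\preceq\cE$, $\cE\preceq\cD$, and $\cD\preceq\cP_s$.

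For $\cN\preceq\cE$ I would pass to the partial transpose. Given $\rho\in\cN$, set $W:=\rho^{\Gamma}$ (partial transpose on the second system); a quick check shows $W\in\cE$, since $\operatorname{Tr}(W\sigma)=\operatorname{Tr}(\rho\,\sigma^{\Gamma})\geq 0$ for every separable $\sigma$ (because $\sigma^{\Gamma}$ is again separable), while the existence of a negative eigenvalue of $W$ forces a strictly negative expectation on some entangled state (it cannot be attained on a product vector). The crucial point is that partial transposition intertwines with LU conjugation: if $\rho_2=(U\ox V)\rho_1(U\ox V)^\dg$, then $\rho_2^{\Gamma}=(U\ox V^*)\rho_1^{\Gamma}(U\ox V^*)^\dg$. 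Since $\Gamma$ is an involution, $\rho_1\sim\rho_2\Leftrightarrow\rho_1^{\Gamma}\sim\rho_2^{\Gamma}$, so deciding LU equivalence of NPT states reduces to deciding it for EWs.

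For $\cE\preceq\cD$ and $\cD\preceq\cP_s$ the tool is Lemma \ref{le:noise-sep}, together with the elementary identity $(U\ox V)(H+x\I)(U\ox V)^\dg=(U\ox V)H(U\ox V)^\dg+x\I$, which shows $H_1\sim H_2\Leftrightarrow H_1+x\I\sim H_2+x\I$ for every real $x$. To reduce $\cE$ to $\cD$, given $W_1,W_2\in\cE$ I would choose $x>\max_i|\lambda_{\min}(W_i)|$ so that each $W_i+x\I$ is positive semidefinite and hence lies in $\cD$; an oracle for LU decidability on $\cD$ then resolves $W_1\sim W_2$. To reduce $\cD$ to $\cP_s$, I would apply Lemma \ref{le:noise-sep} to $\rho_1,\rho_2\in\cD$ and choose a common $x$ large enough for both $\rho_i+x\I$ to be separable; this works because increasing $x$ preserves separability (adding further multiples of the separable operator $\I$). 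Combined with the trivial $\cP_s\preceq\cD$, this yields $\cD\xlongequal{d}\cP_s$.

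Chaining these links produces $\cN\preceq\cE\preceq\cD\xlongequal{d}\cP_s\xlongequal{d}\cP\succeq\cP_e$. I expect the main obstacle to be the first link $\cN\preceq\cE$, since one must verify both that partial transposition sends NPT states to bona fide EWs and that it preserves LU equivalence up to the conjugation $V\mapsto V^*$ on the second factor; the remaining reductions are essentially immediate from Lemma \ref{le:noise-sep} and the inclusion remark after Definition \ref{def:decide}.
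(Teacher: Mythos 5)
Your proposal is correct and follows essentially the same route as the paper: partial transposition for $\cN\preceq\cE$, adding a multiple of the identity (via Lemma \ref{le:noise-sep}) for $\cE\preceq\cD$ and $\cD\preceq\cP_s$, and the inclusion/sandwich remark after Definition \ref{def:decide} for the remaining links. If anything you are slightly more careful than the paper's Lemma \ref{le:drelation}(ii), which shifts $W_1$ and $W_2$ by possibly different constants $x_1,x_2$, whereas your common shift $x$ is what actually makes the equivalence $W_1\sim W_2\Leftrightarrow W_1+x\I\sim W_2+x\I$ go through; you also make explicit the conjugation $V\mapsto V^*$ under partial transposition, which the paper leaves as ``one can verify.''
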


\begin{proof}
(i). The relation $\cN\preceq\cE$ follows by Lemma \ref{le:drelation} (i).

(ii). The relation $\cE \preceq \cD$ follows by Lemma \ref{le:drelation} (ii).

(iii). The equivalence relation $\cD \xlongequal{d} \cP_s \xlongequal{d} \cP$ follows by Lemma \ref{le:drelation} (iii) and the inclusion relation $\cP_s\subseteq\cP\subseteq\cD$.

(iv). The relation $\cP \succeq \cP_e$ follows directly by $\cP_e\subset\cP$.

This completes the proof.
\end{proof}

\begin{lemma}
\label{le:drelation}
(i) One can decide whether two arbitrary NPT states are LU equivalent or not, if the LU equivalence or LU inequivalence between two arbitrary EWs is decidable.

(ii) One can decide whether two arbitrary EWs are LU equivalent or not, if the LU equivalence or LU inequivalence between two arbitrary states is decidable.

(iii) One can decide whether two arbitrary states are LU equivalent or not, if and only if the LU equivalence or LU inequivalence between two arbitrary separable states is decidable.
\end{lemma}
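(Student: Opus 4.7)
The unifying observation is that any LU operator $U \ox V$ fixes every scalar multiple of the identity, so for any real constant $c$ and any bipartite Hermitian operators $M,N$ on $\bbC^m \ox \bbC^n$ we have $M \sim N$ iff $M + c\I \sim N + c\I$. This already handles parts (ii) and (iii). For part (iii), the ``only if'' direction is immediate from $\cP_s \subseteq \cD$. For the ``if'' direction I would take two arbitrary states $\rho,\sigma \in \cD$ and invoke Lemma \ref{le:noise-sep} to pick a single $x > 0$ large enough that both $\rho + x\I$ and $\sigma + x\I$ are separable; by the identity invariance above, $\rho \sim \sigma$ iff $\rho + x\I \sim \sigma + x\I$, and the latter is decidable by the hypothesis on $\cP_s$.

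For part (ii), given two EWs $W_1, W_2 \in \cE$, I would choose any real $c \geq \max\{|\lambda_{\min}(W_1)|, |\lambda_{\min}(W_2)|\}$; then both $W_i + c\I$ are positive semidefinite and hence lie in $\cD$. By the same identity invariance, $W_1 \sim W_2$ iff $W_1 + c\I \sim W_2 + c\I$, and the latter is decidable by the hypothesized decision procedure on $\cD$.

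For part (i), I would pass from NPT states to their partial transposes on the second subsystem, denoted $\rho^{\Gamma}$. Two small things need to be checked. First, for NPT $\rho$ the operator $\rho^{\Gamma}$ lies in $\cE$: it is Hermitian, has at least one negative eigenvalue by the very definition of NPT, and for every separable $\sigma$ one has $\mathrm{Tr}(\rho^{\Gamma}\sigma) = \mathrm{Tr}(\rho\, \sigma^{\Gamma}) \geq 0$ because $\sigma^{\Gamma}$ is again separable (hence positive semidefinite) and $\rho$ is positive semidefinite. Second, partial transpose intertwines with the LU action: a direct index computation starting from $\rho = (U \ox V)\sigma(U \ox V)^{\dg}$ yields
\[
\rho^{\Gamma} = (U \ox \bar V)\,\sigma^{\Gamma}\,(U \ox \bar V)^{\dg}.
\]
Since $V \mapsto \bar V$ is a bijection on the unitary group and partial transpose is an involution, this gives $\rho \sim \sigma \Leftrightarrow \rho^{\Gamma} \sim \sigma^{\Gamma}$. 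Consequently, deciding LU equivalence for any pair of NPT states reduces to deciding LU equivalence for the associated pair of EWs.

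The content of all three parts is therefore light, resting on the two structural facts that LU operators stabilize multiples of $\I$ and that they intertwine, up to complex conjugation on one tensor factor, with partial transpose. The main place where a little care is needed, and which I would flag as the principal obstacle, is the verification in part (i) that $\rho^{\Gamma}$ genuinely lies in $\cE$ (in particular the positivity-on-separable-states condition) and the careful bookkeeping that matches the $U \ox V$ on one side of the reduction with the $U \ox \bar V$ on the other. Beyond this bookkeeping and the application of Lemma \ref{le:noise-sep}, no further machinery appears to be required.
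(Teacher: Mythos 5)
Your proposal is correct and follows essentially the same route as the paper: partial transposition (together with the intertwining relation $\rho^\Gamma=(U\ox \bar V)\sigma^\Gamma(U\ox\bar V)^\dg$) for part (i), shifting by a multiple of the identity for part (ii), and Lemma \ref{le:noise-sep} combined with a common shift $x=\max\{x_1,x_2\}$ for part (iii). If anything, your insistence on a \emph{single} shift constant $c$ in part (ii) is slightly more careful than the paper's phrasing, which introduces two constants $x_1,x_2$ before invoking the invariance of LU equivalence under a common shift.
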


\begin{proof}
(i) The assertion follows from the fact that the partial transpose of an NPT state is an EW. Suppose that $\rho_{AB}$ and $\sigma_{AB}$ are two NPT states. It follows that $\rho_{AB}^\Gamma$ and $\sigma_{AB}^\Gamma$ are both EWs. One can verify that the LU equivalence between two bipartite operators $M,N$ is the same as the LU equivalence between $M^\Gamma$ and $N^\Gamma$. Thus, we conclude that $\rho_{AB}\sim\sigma_{AB}$ is equivalent to $\rho_{AB}^\Gamma\sim\sigma_{AB}^\Gamma$. Due to the assumption, the LU equivalence or LU inequivalence between $\rho_{AB}^\Gamma$ and $\sigma_{AB}^\Gamma$ is decidable. Thus, the LU equivalence or LU inequivalence between $\rho_{AB}$ and $\sigma_{AB}$ is also decidable. This assertion implies that $\cN\preceq\cE$.

(ii) One may transform the problem of determining whether two EWs are LU equivalent to the problem for two states, by the following relation between EWs and states: for an arbitrary EW $W$ supported on the bipartite Hilbert space $\bbC^m\ox\bbC^n$, there exists a large enough coefficient $x$ such that 
\beq
\label{eq:ewtostate-1}
\rho_x:=W+x\I_{mn}
\eeq
is positive semidefinite, thus a non-normalized state. Let $W_1$ and $W_2$ be two arbitrary EWs supported on $\bbC^m\ox\bbC^n$. Using the technique given by Eq. \eqref{eq:ewtostate-1}, one can find two large enough coefficients $x_1,x_2$ such that $\rho_{x_1}:=W_1+x_1 \I_{mn}$ and $\rho_{x_2}:=W_2+x_2 \I_{mn}$ are two states. Then one can verify that the LU equivalence of $W_1$ and $W_2$ is simultaneous with that of $\rho_{x_1}$ and $\rho_{x_2}$. Thus, if one can decide whether two states are LU equivalent or not, then whether two EWs are LU equivalent or not is also decidable by the transformation as Eq. \eqref{eq:ewtostate-1}. This assertion implies that $\cE\preceq\cD$.

(iii) 
The ``Only if'' part follows directly by the relation $\cP_s\subset\cD$ and implies that $\cP_s\preceq\cD$. Next, we prove the ``If'' part which implies that $\cP_s\succeq\cD$. Let $\rho$ and $\sigma$ be two arbitrary states supported on the same bipartite Hilbert space. By Lemma \ref{le:noise-sep} we conclude that there exist two large enough coefficients $x_1,x_2$ such that $\rho+x_1\I$ and $\sigma+x_2\I$ are both separable. Denote $x:=\max{x_1,x_2}$. Thus, we have $\rho_x\equiv\rho+x\I$ and $\sigma_x\equiv\sigma+x\I$ are both separable. Similar to assertion (i), we know that the LU equivalence of $\rho$ and $\sigma$ is simultaneous with that of $\rho_x$ and $\sigma_x$. Due to the assumption, the LU equivalence or inequivalence of $\rho_x$ and $\sigma_x$ is decidable, and thus the LU equivalence or inequivalence of $\rho$ and $\sigma$ is also decidable. To sum up, we obtain that $\cP_s\xlongequal{d}\cD$ by definition.

This completes the proof.
\end{proof}

To prove Lemma \ref{le:drelation}, it is essential to observe the fact that for any two bipartite Hermitian operators $\rho,\sigma$ supported on $\bbC^m\ox\mathbb{C}^n$ and any real number $x$, the LU equivalence between $\rho$ and $\sigma$ is simultaneous with that between $\rho+x\I_{mn}$ and $\sigma+x\I_{mn}$. In the following lemma, we propose a set $\cD_{\lambda_1}\subset\cD$, and reveal the relation of $\cD_{\lambda_1}$ and $\cE$ on LU decidability, by virtue of the above observation.

\begin{lemma}
\label{le:fullrank}
Denote by $\cD_{\lambda_1}$ the set of states each of whose $\lambda_1$-eigenspaces contains no product vector, where $\lambda_1$ is the largest eigenvalue of a state. Then the LU decidabilities of $\cD_{\lambda_1}$ and $\cE$ admit the relation as $\cD_{\lambda_1}\preceq \cE$.
\end{lemma}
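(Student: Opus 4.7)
The plan is to exploit the converse of the EW-to-state shift used in Lemma \ref{le:drelation} (ii): one would turn each state in $\cD_{\lambda_1}$ into an EW by subtracting a multiple of the identity. The underlying identity is that for any bipartite Hermitian operator $H$ on $\bbC^m\ox\bbC^n$ and any real number $x$, $H\sim K$ holds if and only if $x\I_{mn}-H\sim x\I_{mn}-K$, since conjugation by a local unitary commutes with adding a multiple of the identity. Hence, if one can select a single shift $x$ such that both $W_\rho:=x\I_{mn}-\rho$ and $W_\sigma:=x\I_{mn}-\sigma$ are EWs, then the LU oracle on $\cE$ decides $W_\rho\sim W_\sigma$, which is equivalent to $\rho\sim\sigma$.

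The next step is to identify the admissible shifts. The Hermitian operator $W_\rho=x\I_{mn}-\rho$ is an EW precisely when (i) it is not positive semidefinite, i.e. $x<\lambda_1(\rho)$, and (ii) $\bra{\alpha,\beta}W_\rho\ket{\alpha,\beta}\geq 0$ for every unit product vector $\ket{\alpha,\beta}$, i.e. $x\geq M_\rho$, where $M_\rho:=\max_{\ket{\alpha,\beta}}\bra{\alpha,\beta}\rho\ket{\alpha,\beta}$ with the maximum taken over unit product vectors. The set of unit product vectors is compact and the quadratic form is continuous, so $M_\rho$ is attained. Since $\bra{\psi}\rho\ket{\psi}=\lambda_1(\rho)$ can hold only when $\ket{\psi}$ lies in the $\lambda_1$-eigenspace of $\rho$, and this eigenspace contains no product vector by hypothesis $\rho\in\cD_{\lambda_1}$, one obtains the strict bound $M_\rho<\lambda_1(\rho)$, so the interval $[M_\rho,\lambda_1(\rho))$ is nonempty.

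To combine the two reductions, I would first check whether $\lambda_1(\rho)=\lambda_1(\sigma)$, a necessary condition for $\rho\sim\sigma$ since LU preserves the spectrum, which can be read from the spectral decompositions; if these largest eigenvalues differ, output ``LU inequivalent''. Otherwise, with $\lambda_1:=\lambda_1(\rho)=\lambda_1(\sigma)$, pick any $x\in[\max\{M_\rho,M_\sigma\},\lambda_1)$, which is nonempty by the previous paragraph. Then both $W_\rho$ and $W_\sigma$ lie in $\cE$, and $\rho\sim\sigma\Leftrightarrow W_\rho\sim W_\sigma$ by the shift identity; invoking the assumed LU decidability of $\cE$ completes the reduction and yields $\cD_{\lambda_1}\preceq\cE$.

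The crux is the strict gap $M_\rho<\lambda_1(\rho)$, and this is exactly where the defining property of $\cD_{\lambda_1}$ is needed: without it one could have $M_\rho=\lambda_1(\rho)$, and then no $x$ would simultaneously render $W_\rho$ non-positive and nonnegative on all separable states, which would collapse the EW construction. Compactness of the product-vector manifold is what upgrades the pointwise strict inequality $\bra{\alpha,\beta}\rho\ket{\alpha,\beta}<\lambda_1(\rho)$ to the uniform bound $M_\rho<\lambda_1(\rho)$; the rest of the argument is routine spectral bookkeeping.
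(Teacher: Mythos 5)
Your proposal is correct and follows essentially the same route as the paper: both reduce $\rho\sim\sigma$ to $W_\rho\sim W_\sigma$ via the shift $W:=\mu\I-\rho$ with $\mu$ the maximum of the quadratic form over unit product vectors, and both use the hypothesis that the $\lambda_1$-eigenspace contains no product vector to get the strict gap $\mu<\lambda_1$ that makes the shifted operators genuine EWs. Your explicit compactness argument for attaining the maximum and your observation that only the largest eigenvalues need to be compared up front are minor refinements of the paper's presentation, not a different method.
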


\begin{proof}
For any two given states $\rho_1,\rho_2\in\cD_{\lambda_1}$, their spectral decompositions read as $\rho_1=\sum_j\lambda_j\ketbra{\psi_j}$ and $\rho_2=\sum_j\mu_j\ketbra{\phi_j}$, where $\lambda_j$'s and $\mu_j's$ are all greater than zero, and sorted by descending order respectively. By Lemma \ref{le:spectral-dec}, one can decide $\rho_1$ amd $\rho_2$ are not LU equivalent, if there exists some $j$ such that $\lambda_j\neq \mu_j$. We thus consider the case $\lambda_j=\mu_j$ for each $j$. Based on $\rho_1,\rho_2$, we construct two operators by
\beq
\label{eq:espace-1}
\bal
W_1&:=\mu\I - \rho_1, \\
W_2&:=\mu\I - \rho_2,
\eal
\eeq
where $\mu$ is defined as
\beq
\label{eq:espace-2}
\mu:=\max_{\norm{\ket{a,b}}=1}\left\{\bra{a,b}\rho_1\ket{a,b},\bra{a,b}\rho_2\ket{a,b}\right\}.
\eeq
It follows from $\rho_1,\rho_2\in\cD_{\lambda_1}$ that the $\lambda_1$-eigenspaces of $\rho_1$ and $\rho_2$ respectively contain no product vector. Due to this fact, one can verify that $\mu$ is strictly smaller than $\lambda_1$ which is the largest eigenvalue for both $\rho_1$ and $\rho_2$. Hence, $W_1,W_2$ given by Eq. \eqref{eq:espace-1} are non-positive semidefinite. We further obtain that for any product vector $\ket{a,b}$ with norm $1$,
\beq
\label{eq:espace-3}
\bal
\bra{a,b}W_1\ket{a,b}&=\mu-\bra{a,b}\rho_1\ket{a,b}\geq 0, \\
\bra{a,b}W_2\ket{a,b}&=\mu-\bra{a,b}\rho_2\ket{a,b}\geq 0,
\eal
\eeq
where the two inequalities follow directly by Eq. \eqref{eq:espace-2}. Thus, we conclude that both $W_1,W_2$ are EWs by definition. According to Eq. \eqref{eq:espace-1}, the LU equivalence or inequivalence between $\rho_1$ and $\rho_2$ is decidable, if one can decide whether $W_1$ and $W_2$ are LU equivalent or not. Due to $W_1,W_2\in\cE$, we finally derive that $\cD_{\lambda_1}\preceq \cE$.

This completes the proof.
\end{proof}

Next, we generalize the idea in Lemma \ref{le:fullrank}, and derive a similar result on each set $\cD_{\lambda_j}$, where $\cD_{\lambda_j}$ represents the set of states each of whose $\lambda_j$-eigenspaces contains no product vector. 

\begin{lemma}
\label{le:range-2}
Denote by $\cD_{\lambda_j}$ the set of states each of whose $\lambda_j$-eigenspaces contains no product vector. Then the LU decidabilities of $\cD_{\lambda_j}$ and $\cE$ admit $\cD_{\lambda_j}\preceq\cE$.
\end{lemma}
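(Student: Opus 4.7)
The plan is to reduce the general case to the already-settled case $j=1$ of Lemma \ref{le:fullrank} by a simple spectral manipulation that promotes the $\lambda_j$-eigenspace to the role of the largest eigenspace, without disturbing the underlying spectral projectors. Throughout, I would fix two arbitrary states $\rho_1,\rho_2 \in \cD_{\lambda_j}$ with spectral decompositions $\rho_i=\sum_{k}\lambda_k^{(i)}P_k^{(i)}$ in descending order.

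First I would dispose of the easy case: if the ordered lists of nonzero eigenvalues of $\rho_1$ and $\rho_2$ differ, Lemma \ref{le:spectral-dec} immediately gives $\rho_1\nsim\rho_2$, with no appeal to any EW oracle needed. So I may assume $\lambda_k^{(1)}=\lambda_k^{(2)}=\lambda_k$ for every $k$, and in particular both states have the same $j$-th eigenvalue $\lambda_j$ and the same number of spectral projectors.

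Next I would define the shifted operators
\beq
\tilde\rho_i := \rho_i + c\,P_j^{(i)},
\eeq
where the scalar $c>0$ is chosen large enough that $\lambda_j+c$ exceeds every $\lambda_k$. Because $cP_j^{(i)}$ is supported entirely on the $\lambda_j$-eigenspace of $\rho_i$, the spectral projectors of $\tilde\rho_i$ coincide with those of $\rho_i$, while its eigenvalue list is obtained from that of $\rho_i$ by the bijection $\lambda_j\mapsto \lambda_j+c$. In particular all eigenvalues of $\tilde\rho_i$ remain distinct, and the new largest-eigenvalue eigenspace of $\tilde\rho_i$ is exactly the $\lambda_j$-eigenspace of $\rho_i$, which by hypothesis contains no product vector. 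Hence $\tilde\rho_1,\tilde\rho_2\in\cD_{\lambda_1}$.

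The crucial observation is then that $\rho_1\sim\rho_2$ if and only if $\tilde\rho_1\sim\tilde\rho_2$. By Lemma \ref{le:spectral-dec}, both equivalences reduce to: (a) equal eigenvalue lists (which hold automatically by our assumption and by construction), and (b) SLU equivalence of the spectral-projector tuples $(P_k^{(1)})_k$ and $(P_k^{(2)})_k$, which is the same condition in both cases because the shift does not alter the projectors. Finally, applying Lemma \ref{le:fullrank} to $\tilde\rho_1,\tilde\rho_2\in\cD_{\lambda_1}$ reduces the decidability of $\tilde\rho_1\sim\tilde\rho_2$ to the decidability of LU equivalence for a pair of EWs, whence $\cD_{\lambda_j}\preceq\cE$. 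The main conceptual hurdle, and the only step requiring care, is the verification that the eigenvalue-shift preserves LU equivalence in both directions; this hinges on the fact that $cP_j^{(i)}$ acts within a single spectral eigenspace of $\rho_i$, so the spectral structure is rearranged only via a bijection of eigenvalues and Lemma \ref{le:spectral-dec} applies cleanly on each side.
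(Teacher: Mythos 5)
Your proof is correct, but it takes a genuinely different route from the paper's. You reduce the general $j$ to the already-settled case $j=1$ by the shift $\tilde\rho_i=\rho_i+c\,P_j^{(i)}$, which promotes the product-vector-free eigenspace to the top of the spectrum while leaving the spectral projectors untouched, and then invoke Lemma \ref{le:fullrank} as a black box; the equivalence $\rho_1\sim\rho_2\Leftrightarrow\tilde\rho_1\sim\tilde\rho_2$ follows from Lemma \ref{le:spectral-dec} because both sides demand SLU equivalence of the same projector tuples (up to the common permutation induced by re-sorting the eigenvalues, which is harmless). The paper instead constructs explicit entanglement witnesses directly from $\rho_1,\rho_2$ by flipping the sign of the $j$-th eigenvalue, $W_i=-\mu P_j+\sum_{i\neq j}\lambda_i P_i$, and must then do real work: a full-rank regularization $\rho_i\mapsto\rho_i+x\I$ and a continuity/compactness argument showing $p_{\min},q_{\min}>0$ so that a sufficiently small $\mu>0$ makes $W_1,W_2$ genuine EWs. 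Your argument is shorter and more modular, absorbing all the analytic difficulty into the previously proven $j=1$ case; the paper's is self-contained and exhibits an explicit EW canonically attached to each state of $\cD_{\lambda_j}$, which is more informative about the structural link between such states and witnesses. The only point you should make explicit is that the same constant $c$ is used for both states (legitimate, since you have already reduced to the case of identical spectra) and that the permutation of the descending-order labels affects both tuples identically, so the SLU condition of Lemma \ref{le:spectral-dec} is indeed unchanged.
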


\begin{proof}
For any two states $\rho_1,\rho_2$, one can decide that the two are not LU equivalent by Lemma \ref{le:spectral-dec}, if the eigenvalues of $\rho_1$ are not exactly the same as those of $\rho_2$. We thus consider whether there exists the LU equivalence between $\rho_1,\rho_2\in\cD_{\lambda_j}$ with identical eigenvalues. In this case, let the spectral decompositions of $\rho_1$ and $\rho_2$ respectively be:
\beq
\label{eq:range-2.1}
\bal
\rho_1&=\lambda_jP_j+\sum_{i\neq j}\lambda_i P_i, \\
\rho_2&=\lambda_jQ_j+\sum_{i\neq j}\lambda_i Q_i,
\eal
\eeq
where $P_i$'s (including $P_j$) and $Q_i$'s (including $Q_j$) are two families of orthogonal projectors. By the definition of $\cD_{\lambda_j}$, we obtain that the ranges of $P_j$ and $Q_j$ respectively contain no product vector. One can verify that for any number $x>0$, the two non-normalized states $\rho_1+x\I$ and $\rho_2+x\I$ are still in $\cD_{\lambda_j}$, and they are of full rank. Recall that the LU equivalence between $\rho_1$ and $\rho_2$ is simultaneous with that between $\rho_1+x\I$ and $\rho_2+x\I$. In light of this, it suffices to consider the states of full rank in $\cD_{\lambda_j}$, and thus each eigenvalue written in Eq. \eqref{eq:range-2.1} can be assumed as positive. 

Based on Eq. \eqref{eq:range-2.1}, we construct two Hermitian operators by
\beq
\label{eq:range-2.2}
\bal
W_1&:=-\mu P_j+\sum_{i\neq j}\lambda_i P_i, \\
W_2&:=-\mu Q_j+\sum_{i\neq j}\lambda_i Q_i,
\eal
\eeq
where $\mu>0$. We claim that there exists a small enough $\mu$ such that $W_1$ and $W_2$ are both non-normalized EWs. First, from Eq. \eqref{eq:range-2.2} both $W_1,W_2$ are not positive semidefinite. Second, to make $W_1,W_2$ are EWs, we obtain that for any product vector $\ket{a,b}$ with norm $1$,
\begin{eqnarray}
\label{eq:range-2.3.1}
&&\bra{a,b}\big(\sum_{i\neq j}\lambda_i P_i\big)\ket{a,b}-\mu\bra{a,b}P_j\ket{a,b}\geq 0, \\
\label{eq:range-2.3.2}
&&\bra{a,b}\big(\sum_{i\neq j}\lambda_i Q_i\big)\ket{a,b}-\mu\bra{a,b}Q_j\ket{a,b}\geq 0.
\end{eqnarray}
For convenience, we note that
\beq
\label{eq:range-2.4}
\bal
p_{\max}&:=\max_{\norm{\ket{a,b}}=1} \bra{a,b}P_j\ket{a,b}, \\
q_{\max}&:=\max_{\norm{\ket{a,b}}=1} \bra{a,b}Q_j\ket{a,b}.
\eal
\eeq
If $p_{\max}=0$, then \eqref{eq:range-2.3.1} holds for any $\mu>0$, and similarly  \eqref{eq:range-2.3.2} holds for any $\mu>0$ if $q_{\max}=0$. We then consider that both $p_{\max},q_{\max}$ are positive. We also note that 
\beq
\label{eq:range-2.5}
\bal
p_{\min}&:=\min_{\norm{\ket{a,b}}=1} \bra{a,b}\big(\sum_{i\neq j}\lambda_i P_i\big)\ket{a,b}, \\
q_{\min}&:=\min_{\norm{\ket{a,b}}=1} \bra{a,b}\big(\sum_{i\neq j}\lambda_i Q_i\big)\ket{a,b}.
\eal
\eeq
In fact the following map: 
\beq
\label{eq:range-2.5.1}
\ket{x}\mapsto \bra{x}\hat{P}\ket{x},
\eeq
where $\hat{P}$ is positive semidefinite and sub-normalized (i.e. $\tr(\hat{P})\leq 1$), is a continuous map, because it maps $\ket{x}$ to the square of the norm of $\sqrt{\hat{P}}\ket{x}$ which is bounded by the square of $\norm{\sqrt{\hat{P}}}\cdot\norm{\ket{x}}$. Then, by the knowledge from real analysis \cite{ranalysis}, the two minimums $p_{\min},q_{\min}$ given by Eq \eqref{eq:range-2.5} can be attained. That is, there exist some $\ket{a_1,b_1},\ket{a_2,b_2}$ such that $p_{\min}=\bra{a_1b_1}\big(\sum_{i\neq j}\lambda_i P_i\big)\ket{a_1,b_1}$ and $q_{\min}=\bra{a_2,b_2}\big(\sum_{i\neq j}\lambda_i Q_i\big)\ket{a_2,b_2}$.

We next show by contradiction that for any product vector $\ket{a,b}$, $\bra{a,b}\big(\sum_{i\neq j}\lambda_i P_i\big)\ket{a,b}>0$. It further implies that $p_{\min}>0$. One can similarly show $q_{\min}>0$. Assume that if there is some $\ket{a,b}$ such that $\bra{a,b}\big(\sum\limits_{i\neq j}\lambda_i P_i\big)\ket{a,b}=0$. It follows that $\bra{a,b}P_i\ket{a,b}=0$ for each $i\neq j$, as each eigenvalue $\lambda_i$ is positive and $P_i$'s are orthogonal projectors. It means that $\ket{a,b}$ lies in the orthogonal complement space of $\oplus_{i\neq j}\cR_i$, where $\cR_i$ denotes the range of $P_i$ for each $i$. Recall that $\rho_1,\rho_2$ are both of full rank. By Eq. \eqref{eq:range-2.1} we conclude that the orthogonal complement space of $\oplus_{i\neq j}\cR_i$ is indeed the range of $P_j$, namely $\cR_j$. Thus, $\ket{a,b}$ lies in $\cR_j$. It contradicts that $\cR_j$ contains no product vector by the definition of $\cD_{\lambda_j}$. Therefore, we conclude that both $p_{\min}>0$ and $q_{\min}>0$. Let 
\beq
\label{eq:range-2.6}
0<\mu\leq\min\{\frac{p_{\min}}{p_{\max}},\frac{q_{\min}}{q_{\max}}\}.
\eeq
One can verify that both \eqref{eq:range-2.3.1} and \eqref{eq:range-2.3.2} hold if $\mu$ satisfies \eqref{eq:range-2.6}. By the definition of EW, we obtain that $W_1$ and $W_2$ formulated by Eq. \eqref{eq:range-2.2} are both EWs, when $\mu$ is given by \eqref{eq:range-2.6}.

Finally we show that the LU equivalence between $\rho_1$ and $\rho_2$ is decidable, if that between $W_1$ and $W_2$ given by Eq. \eqref{eq:range-2.2} is decidable. By Lemma \ref{le:spectral-dec} we conclude that $W_1\sim W_2$ if and only if $P_i\sim Q_i$ by the same LU operator for each $i$. Since $\rho_1$ has the same projector as $W_1$, and $\rho_2$ has the same projector as $W_2$, then the LU equivalence between $\rho_1$ and $\rho_2$ is simultaneous with that between $W_1$ and $W_2$. If $\cE$ is LU decidable, which means that the LU equivalence between $W_1$ and $W_2$ given by Eq. \eqref{eq:range-2.2} is decidable, thus the LU equivalence between $\rho_1$ and $\rho_2$ is decidable. That is, $\cD_{\lambda_j}\preceq\cE$.

This completes the proof.
\end{proof}

For convenience, denote by $\cD_{\lambda}$ the set of states each of which satisfies that there is a $\lambda_j$-eigenspace containing no product vector. For $\cD_{\lambda}$, it is a union of all $\cD_{\lambda_j}$, and for each $\cD_{\lambda_j}$, we conclude that $\cD_{\lambda_j}\preceq\cE$ by Lemma \ref{le:range-2}. Then a corollary follows directly that $\cD_{\lambda}\preceq\cE$.
One can verify that if each eigenspace of a state $\rho$ contains at least one product vector, then for any local unitary operator $U\ox V$, each eigenspace of $(U\ox V)\rho(U\ox V)^\dg$ also contains at least one product vector. It implies that two states are LU inequivalent, if one belongs to $\cD_{\lambda}$ and the other does not belong to $\cD_{\lambda}$. 

We further show the inverse part of the LU decidability relation between $\cD_{\lambda}$ and $\cE$, namely $\cE\preceq\cD_{\lambda}$.

\begin{lemma}
\label{le:range-3}
Denote by $\cD_{\lambda}$ the set of states satisfying that at least one eigenspace contains no product vector. Then the LU decidabilities of $\cD_{\lambda}$ and $\cE$ admit $\cE\preceq\cD_{\lambda}$.
\end{lemma}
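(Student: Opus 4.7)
My plan is to mirror the construction from Lemma \ref{le:drelation}(ii), where EWs are turned into states by the shift $W \mapsto W + x\I$, but now I must additionally verify that the resulting states actually lie in $\cD_{\lambda}$. The key observation is that for any EW $W$, its strictly negative eigenspaces automatically contain no product vector. Indeed, by the very definition of an EW, $\bra{a,b}W\ket{a,b}\ge 0$ for every product vector $\ket{a,b}$; hence if $\ket{a,b}$ were a normalized product vector lying in the eigenspace associated with a negative eigenvalue $\mu<0$, we would have $\bra{a,b}W\ket{a,b}=\mu<0$, a contradiction. Since an EW is not positive semidefinite, such a negative eigenvalue always exists.

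Given two arbitrary EWs $W_1,W_2\in\cE$ supported on $\bbC^m\ox\bbC^n$, I would choose $x>\max\{|\lambda_{\min}(W_1)|,|\lambda_{\min}(W_2)|\}$ and set $\rho_i := W_i + x\I_{mn}$ for $i=1,2$. Then $\rho_1,\rho_2$ are (non-normalized) states. Adding a scalar multiple of the identity does not alter eigenspaces; it merely shifts every eigenvalue by $x$ and preserves distinctness of eigenvalues. Consequently, the eigenspace of $\rho_i$ corresponding to the shifted negative eigenvalue of $W_i$ is identical to the negative-eigenvalue eigenspace of $W_i$, which by the paragraph above contains no product vector. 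Therefore $\rho_1,\rho_2\in\cD_{\lambda}$.

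Finally, since $(U\ox V)\I(U\ox V)^\dg=\I$ for every local unitary, one has $W_1\sim W_2$ if and only if $\rho_1\sim\rho_2$. So assuming $\cD_{\lambda}$ is LU decidable, the LU equivalence or inequivalence of $\rho_1,\rho_2$ can be decided, which in turn decides that of $W_1,W_2$. This establishes $\cE\preceq\cD_{\lambda}$. I do not anticipate a genuine obstacle here: the entire argument rests on the two elementary facts that an EW's negative-eigenvalue eigenspaces cannot contain product vectors and that the shift $H\mapsto H+x\I$ preserves both eigenspaces and LU equivalence; the only care needed is to choose $x$ large enough that both shifted operators are positive semidefinite while the negative eigenvalue (and its product-vector-free eigenspace) is inherited by the resulting state.
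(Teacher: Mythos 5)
Your proof is correct, and it rests on exactly the same key observation as the paper's: every eigenspace of an EW associated with a negative eigenvalue contains no product vector (else $\bra{a,b}W\ket{a,b}<0$), and an EW always has such an eigenvalue, so any state sharing the EW's eigenprojectors lies in $\cD_{\lambda}$. The only difference is the mechanism for converting the EW into a state. The paper keeps the eigenprojectors $P_j,Q_j$ of $W_1,W_2$ and replaces the eigenvalues by arbitrary positive numbers $\mu_j$, then invokes Lemma \ref{le:spectral-dec} to argue that $W_1\sim W_2$ and $\rho_1\sim\rho_2$ both reduce to the same SLU equivalence of the projector tuples. You instead use the uniform shift $W_i\mapsto W_i+x\I$ with a single sufficiently large $x$ --- the same device the paper already uses in Lemma \ref{le:drelation}(ii) and (iii) --- which preserves eigenspaces, shifts all eigenvalues by the same amount (so their distinctness and ordering survive), and makes the equivalence $W_1\sim W_2\Leftrightarrow\rho_1\sim\rho_2$ immediate from $(U\ox V)\I(U\ox V)^\dg=\I$ without any appeal to Lemma \ref{le:spectral-dec}. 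Both routes are valid; yours is marginally more economical at that step, and it automatically guarantees that the new eigenvalues remain pairwise distinct, a point the paper leaves implicit (if two of its $\mu_j$ coincided, the corresponding eigenspaces of $\rho_1,\rho_2$ would merge and the backward implication from $\rho_1\sim\rho_2$ to $W_1\sim W_2$ would need extra care).
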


\begin{proof}
For any two given EWs $W_1,W_2\in\cE$, their spectral decompositions read as 
\beq
\label{eq:ewequiv-1}
W_1=\sum_{j=1}^k \lambda_j P_j, \quad
W_2=\sum_{j=1}^k \lambda_j Q_j,
\eeq
where $\lambda_j$'s are eigenvalues in descending order. Here we may assume $W_1$ and $W_2$ share the identical eigenvalues. Otherwise, one can decide that $W_1$ and $W_2$ cannot be LU equivalent by Lemma \ref{le:spectral-dec}. Since $W_1$ and $W_2$ are EWs, there is at least one negative eigenvalue for $W_1$ and $W_2$ respectively. By virtue of the definition of EWs, for each negative eigenvalue $\lambda_x$, the corresponding eigenspace is a completely entangled subspace, i.e. contains no product vector. Otherwise, if there is a product vector $\ket{a,b}$ in $\lambda_x$-eigenspace of $W_1$ for some $\lambda_x<0$, then one can verify that $\bra{a,b}W_1\ket{a,b}<0$ which contradicts that $W_1$ is an EW. Based on Eq. \eqref{eq:ewequiv-1} we propose two states as follows:
\beq
\label{eq:ewequiv-2}
\rho_1=\sum_{j=1}^k \mu_j P_j, \quad
\rho_2=\sum_{j=1}^k \mu_j Q_j,
\eeq
where each $\mu_j$ is positive. According to the analysis above, we know that $\rho_1$ and $\rho_2$ both have one eigenspace including no product vector, and thus both $\rho_1$ and $\rho_2$ are in $\cD_{\lambda}$. Therefore, if $\cD_{\lambda}$ is LU decidable, the LU equivalence between $\rho_1$ and $\rho_2$ is clear, and correspondingly the LU equivalence between $W_1$ and $W_2$ given by Eq. \eqref{eq:ewequiv-1} can be determined. In other words, we obtain $\cE\preceq\cD_{\lambda}$. This completes the proof.
\end{proof}

Based on the two lemmas above, we establish an equivalence relation on LU decidability between two sets of different types of Hermitian operators, by the following theorem. Theorem \ref{thm:mainrelation} shows that the set of EWs $\cE$ is LU decidable, if and only if a subset of bipartite states $\cD_{\lambda}$ is LU decidable.

\begin{theorem}
\label{thm:mainrelation}
The LU decidability of $\cE$ is equal to that of $\cD_{\lambda}$, namely $\cE\xlongequal{d}\cD_{\lambda}$.
\end{theorem}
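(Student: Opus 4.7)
The plan is to unpack the definition of $\xlongequal{d}$ given in Definition \ref{def:decide}(iii) and establish the two required $\preceq$-relations separately, each of which is essentially a consequence of the lemmas developed just before the theorem.

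For the direction $\cE \preceq \cD_\lambda$, I would simply quote Lemma \ref{le:range-3}, which already gives exactly this inequality: assuming $\cD_\lambda$ is LU decidable, one attaches to any two EWs $W_1, W_2$ (with matching spectra, otherwise Lemma \ref{le:spectral-dec} already decides them inequivalent) a pair of states $\rho_1, \rho_2 \in \cD_\lambda$ whose projectors match those of $W_1$ and $W_2$, and reads off the LU (in)equivalence of the EWs from that of the states. No further work is needed on this side.

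For the reverse direction $\cD_\lambda \preceq \cE$, I would combine Lemma \ref{le:range-2} (which gives $\cD_{\lambda_j} \preceq \cE$ for every individual $j$) with the union decomposition $\cD_\lambda = \bigcup_j \cD_{\lambda_j}$. Given two states $\rho_1, \rho_2 \in \cD_\lambda$ to test for LU equivalence under the assumption that $\cE$ is LU decidable, I would proceed by cases: first use Lemma \ref{le:spectral-dec} to discard the case of unequal spectra; then, for each eigenvalue $\lambda_j$, check whether the $\lambda_j$-eigenspace of $\rho_1$ (respectively $\rho_2$) contains a product vector. Because a local unitary maps product vectors to product vectors and maps $\lambda_j$-eigenspaces to $\lambda_j$-eigenspaces, if these two indicator patterns disagree then $\rho_1 \nsim \rho_2$ and we are done. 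Otherwise the patterns agree and there exists at least one index $j_0$ for which both $\lambda_{j_0}$-eigenspaces contain no product vector, so $\rho_1, \rho_2 \in \cD_{\lambda_{j_0}}$ and Lemma \ref{le:range-2} decides LU equivalence.

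Combining the two directions via Definition \ref{def:decide}(iii) gives $\cE \xlongequal{d} \cD_\lambda$. The proof is therefore mostly packaging; the only nontrivial point is the case analysis in the second direction, where one must verify that two states in $\cD_\lambda$ which are potentially LU equivalent can always be localized to a common $\cD_{\lambda_{j_0}}$. This is the step I expect to be the main (though still mild) obstacle, but it is already prefigured by the paper's remark immediately after Lemma \ref{le:range-2} that states in $\cD_\lambda$ cannot be LU equivalent to states outside $\cD_\lambda$, and the same invariance argument localizes to each eigenvalue index.
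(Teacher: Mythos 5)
Your proposal is correct and follows essentially the same route as the paper: the paper's proof of Theorem \ref{thm:mainrelation} is exactly the combination of Lemma \ref{le:range-3} for $\cE\preceq\cD_{\lambda}$ and Lemma \ref{le:range-2} together with the union $\cD_{\lambda}=\bigcup_j\cD_{\lambda_j}$ for $\cD_{\lambda}\preceq\cE$. Your explicit case analysis on which eigenspaces contain product vectors merely spells out the step the paper leaves as an immediate corollary (using the same LU-invariance observation stated just after Lemma \ref{le:range-2}), so there is no substantive difference.
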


Theorem \ref{thm:mainrelation} follows directly by combining Lemmas \ref{le:range-2} and \ref{le:range-3} whose proofs are given above. By Theorem \ref{cor:drelation} and Theorem \ref{thm:mainrelation}, we establish the relations on LU decidabilities among the set of EWs $\cE$ and several sets of bipartite states. These known relations present a two-level hierarchy depicted in Fig. \ref{fig:1}.

The set $\cD_\lambda$ intersects with $\cN,\cP_e,\cP_s$ respectively. The following remark roughly characterizes the states in $\cD_{\lambda}$.
\begin{remark}
\label{re:cdl}
The set $\cD_\lambda$ contains NPT states, PPT entangled states and separable states. We construct examples to show the existence of such three classes of states in $\cD_{\lambda}$. First, we propose an NPT state in $\cD_\lambda$ as follows:
\beq
\label{eq:cdl-1}
\rho_1:=\frac{1}{10}\left(\ketbra{01}+\ketbra{10}+2\ketbra{\phi_+}+6\ketbra{\phi_-}\right),
\eeq
where $\ket{\phi_+}=\frac{1}{\sqrt{2}}(\ket{00}+\ket{11})$ and $\ket{\phi_-}=\frac{1}{\sqrt{2}}(\ket{00}-\ket{11})$. One can verify that $\rho_1$ given by Eq. \eqref{eq:cdl-1} belongs to $\cD_\lambda$ as Eq. \eqref{eq:cdl-1} is the spectral decomposition of $\rho_1$. According to Eq. \eqref{eq:cdl-1}, the eigenspaces respectively corresponding to the eigenvalues $1/5,3/5$ contain no product vector. Further, by calculation the four eigenvalues of $\rho_1^\Gamma$ are $0.4,0.4,0.3,-0.1$, and thus $\rho_1$ is an NPT state. 

Second, we propose a PPT entangled state in $\cD_\lambda$. Recall the construction of PPT entangled states by virtue of UPBs. Denote by $\cB=\{\ket{\ket{a_i,b_i}}\}_{i=1}^l$ a UPB. Then we may construct a PPT entangled state as
\beq
\label{eq:ppte-upb}
\sigma=\frac{1}{n-l}\left(\I_n-\sum_{i=1}^l \ketbra{a_i,b_i}\right).
\eeq
By the definition of UPB we know the range of $\sigma$ is a completely entangled subspace, and thus $\sigma$ is entangled. It follows that there exists a small enough $\epsilon>0$ such that $\sigma+\epsilon\I$ is still PPT entangled and has full rank. Thus we obtain that a PPT entangled state $\rho_2:=\sigma+\epsilon\I\in\cD_\lambda$. 

Finally we construct a separable state in $\cD_\lambda$. The following is a two-qubit separable state of rank three:
\beq
\label{eq:cdl-3}
\rho'_3:=\frac{1}{6}\left(\ketbra{00}+\ketbra{11}+4\ketbra{+,+}\right),
\eeq
where $\ket{+}=\frac{1}{\sqrt{2}}(\ket{0}+\ket{1})$. One may check that $\ket{01}-\ket{10}$ is orthogonal to the range of $\rho'_3$. Thus, we conclude that $\rho_3:=\rho'_3+\epsilon\I$ for some positive $\epsilon$ has an eigenspace spanned by $\ket{01}-\ket{10}$ only. It means that $\rho_3$ is a separable state and has an eigenspace including no product vector. Hence, $\rho_3$ belongs to $\cD_\lambda$.
\qed
\end{remark}



We then extend our study to a larger set including $\cD_{\lambda}$, denoted by $\overline{\cD_\lambda}$. The state in $\overline{\cD_\lambda}$ satisfies that there exists some eigenspace not spanned by product vectors. By Proposition \ref{prop:kprodv} below we show that the LU decidability of $\overline{\cD_\lambda}$ is equal to that of $\cD_{\lambda}$, namely $\overline{\cD_\lambda}\xlongequal{d}\cD_{\lambda}$.

\begin{proposition}
\label{prop:kprodv}
Denote by $\overline{\cD_\lambda}$ the set of states each of which satisfies that there exists some eigenspace not spanned by product vectors. Then $\overline{\cD_\lambda}$ is LU decidable, if and only if $\cD_{\lambda}$ is LU decidable. That is, $\overline{\cD_{\lambda}}\xlongequal{d}\cD_{\lambda}$.
\end{proposition}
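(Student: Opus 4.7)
The plan is to establish the two relations $\cD_\lambda \preceq \overline{\cD_\lambda}$ and $\overline{\cD_\lambda} \preceq \cD_\lambda$ separately. The first is immediate: any state with an eigenspace containing no product vector trivially has an eigenspace not spanned by product vectors, so $\cD_\lambda \subseteq \overline{\cD_\lambda}$, and hence $\cD_\lambda \preceq \overline{\cD_\lambda}$ by the inclusion remark following Definition \ref{def:decide}.

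For the reverse direction, I would design a canonical \emph{eigenspace splitting} that reduces the LU equivalence of two states in $\overline{\cD_\lambda}$ to that of two states in $\cD_\lambda$. For any eigenspace $V$ of a state, let $W \subseteq V$ be the span of the product vectors contained in $V$, and let $V'$ be the orthogonal complement of $W$ inside $V$. Then $V'$ contains no product vector (a product vector in $V'$ would by definition lie in $W$), and $V' \neq \{0\}$ precisely when $V$ is not spanned by product vectors. Given $\rho, \sigma \in \overline{\cD_\lambda}$, first check whether they share the same spectrum; if not, they are LU inequivalent by Lemma \ref{le:spectral-dec}. Otherwise write $\rho = \sum_j \lambda_j P_j$ and $\sigma = \sum_j \lambda_j Q_j$, split each projector as $P_j = P_j^{(W)} + P_j^{(V')}$ and $Q_j = Q_j^{(W)} + Q_j^{(V')}$ according to the above decomposition, and choose $2k$ pairwise distinct positive numbers $\mu_j', \mu_j''$ to form
\begin{equation*}
\tilde\rho := \sum_j \left( \mu_j' P_j^{(W)} + \mu_j'' P_j^{(V')} \right), \quad \tilde\sigma := \sum_j \left( \mu_j' Q_j^{(W)} + \mu_j'' Q_j^{(V')} \right).
\end{equation*}
Since $\rho \in \overline{\cD_\lambda}$ forces at least one $P_j^{(V')}$ to be nonzero, $\tilde\rho$ has an eigenspace containing no product vector, i.e. $\tilde\rho \in \cD_\lambda$, and similarly for $\tilde\sigma$. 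If, at some index $j$, the dimensions of $P_j^{(W)}$ and $Q_j^{(W)}$ differ, then $\tilde\rho$ and $\tilde\sigma$ have mismatched eigenvalue multiplicities, which by Lemma \ref{le:spectral-dec} certifies both $\tilde\rho \nsim \tilde\sigma$ and $\rho \nsim \sigma$.

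The crux is then to prove the equivalence $\rho \sim \sigma \Longleftrightarrow \tilde\rho \sim \tilde\sigma$. The direction $(\Leftarrow)$ is straightforward: if some $U\ox V$ maps $\tilde\rho$ to $\tilde\sigma$, then by Lemma \ref{le:spectral-dec} it sends each $P_j^{(W)} \mapsto Q_j^{(W)}$ and $P_j^{(V')} \mapsto Q_j^{(V')}$; summing recovers $(U\ox V) P_j (U\ox V)^\dg = Q_j$, so the same operator maps $\rho$ to $\sigma$. The harder direction $(\Rightarrow)$ is the main obstacle, and rests on the elementary fact that $U\ox V$ bijectively sends product vectors to product vectors: if $U\ox V$ maps the $j$-th eigenspace of $\rho$ onto that of $\sigma$, it carries the product vectors of the former bijectively onto those of the latter, so the product-span $W_j$ of $\rho$ is mapped onto $W_j$ of $\sigma$, and by unitarity the orthogonal complements $V'_j$ are matched likewise. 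This forces $(U\ox V) P_j^{(W)} (U\ox V)^\dg = Q_j^{(W)}$ and similarly for the $V'$-parts, so the same $U\ox V$ maps $\tilde\rho$ to $\tilde\sigma$. Granted this equivalence, decidability of $\cD_\lambda$ yields decidability of $\overline{\cD_\lambda}$, establishing $\overline{\cD_\lambda} \xlongequal{d} \cD_\lambda$.
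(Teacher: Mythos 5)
Your proposal is correct and follows essentially the same route as the paper's proof: split each eigenprojector into the part whose range is the span of the contained product vectors and the orthogonal part containing no product vector, reassign distinct positive eigenvalues to obtain states in $\cD_\lambda$, and transfer LU (in)equivalence via Lemma \ref{le:spectral-dec} together with the fact that LU operators carry product vectors to product vectors. The only cosmetic difference is that you perform this splitting on every eigenspace, whereas the paper splits only one eigenspace that is not spanned by product vectors; both versions work.
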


\begin{proof}
First, we conclude that $\cD_{\lambda}\preceq\overline{\cD_{\lambda}}$ due to $\cD_{\lambda}\subset\overline{\cD_{\lambda}}$. For $\overline{\cD_{\lambda}}\xlongequal{d}\cD_{\lambda}$ to hold, we next have to show $\overline{\cD_{\lambda}}\preceq\cD_{\lambda}$. It means to show that for any two states in $\overline{\cD_{\lambda}}$, whether they are LU equivalent or not is decidable, if the set $\cD_{\lambda}$ is LU decidable.
Suppose that $\rho,\sigma$ are two arbitrary states in $\overline{\cD_{\lambda}}$. It suffices to consider that $\rho,\sigma$ have identical eigenvalues, otherwise the two states are LU inequivalent. We may also assume that $\rho,\sigma$ both have full rank by means of adding the same white noise, namely $\rho+x\I$ and $\sigma+x\I$. 

Let the spectral decompositions of $\rho$ and $\sigma$ be:
\beq 
\label{eq:kprodv-1}
\bal
\rho&=\sum_{j=1}^l \lambda_j P_j, \\
\sigma&=\sum_{j=1}^l \lambda_j Q_j, \\
\eal
\eeq
where $\lambda_j$'s are all positive, and for each $j$ the projectors $P_j$ and $Q_j$ have the same rank. By the definition of $\overline{\cD_{\lambda}}$, it means that for each of $\rho$ and $\sigma$, there exists some eigenspace not spanned by product vectors. For the case when $\cR(P_j)$ is not spanned by product vectors while $\cR(Q_j)$ is spanned by product vectors, one can verify that $P_j\nsim Q_j$, and thus conclude that $\rho\nsim\sigma$ by Lemma \ref{le:spectral-dec}. Then it suffices to consider that for some $j$, both $\cR(P_j)$ and $\cR(Q_j)$ are not spanned by product vectors. Without loss of generality we may assume that both $\cR(P_1)$ and $\cR(Q_1)$ are not spanned by product vectors. We may decompose $P_1$ and $Q_1$ as below:
\beq
\label{eq:rprod-2}
\bal
P_1&=P_{11}+P_{12} \\
Q_1&=Q_{11}+Q_{12}, \\
\eal
\eeq
where $P_{11},P_{12}$ are orthogonal projectors, $Q_{11},Q_{12}$ are orthogonal projectors, and $\cR(P_{11})$ is spanned by product vectors and $\cR(P_{12})$ contains no product vector, $\cR(Q_{11})$ is spanned by product vectors and $\cR(Q_{12})$ contains no product vector. In other words, all product vectors in $\cR(P_1)$ are included in $\cR(P_{12})$, the same for $\cQ(Q_1)$. Substituting the decompositions into Eq. \eqref{eq:kprodv-1} we obtain that 
\beq
\label{eq:kprodv-3}
\bal
\rho&=\lambda_1 P_{11} + \lambda_1 P_{12}+\sum_{j=2}^l \lambda_j P_j, \\
\sigma&=\lambda_1 Q_{11} + \lambda_1 Q_{12}+\sum_{j=2}^l \lambda_j Q_j. \\
\eal
\eeq

We claim that $\rho\sim\sigma$ if and only if $(P_{11},P_{12},P_2,\cdots P_l)$ is SLU equivalent to $(Q_{11},Q_{12},Q_2,\cdots Q_l)$ for the following reason. The ``If'' part can be verified directly. Then we show the ``Only if'' part. With the condition $\rho\sim\sigma$, we obtain that $(P_1,P_2,\cdots P_l)$ is SLU equivalent to $(Q_1,Q_2,\cdots Q_l)$. It follows that there exists an LU operator $U\ox V$ such that
\begin{eqnarray}
\label{eq:kprodv-4.1}
&&(U\ox V)(P_{11}+P_{12})(U\ox V)^\dg=Q_{11}+Q_{12}, \\
\label{eq:kprodv-4.2}
&&(U\ox V)P_j(U\ox V)^\dg=Q_j,~~j>1.
\end{eqnarray}
One can verify that the range of $(U\ox V)P_{11}(U\ox V)^\dg$ is spanned by product vectors, and the range of $(U\ox V)P_{12}(U\ox V)^\dg$ contains no product vector. By Eq. \eqref{eq:kprodv-4.1} we conclude that the range of $(U\ox V)P_{11}(U\ox V)^\dg$ is the same as $\cR(Q_{11})$, and the range of $(U\ox V)P_{12}(U\ox V)^\dg$ is the same as $\cR(Q_{12})$. It is known that two projectors are the same if they have the same range. Thus, we derive that $(U\ox V)P_{11}(U\ox V)^\dg=Q_{11}$, and $(U\ox V)P_{12}(U\ox V)^\dg=Q_{12}$. Combining this conclusion with Eq. \eqref{eq:kprodv-4.2} we obtain that $(P_{11},P_{12},P_2,\cdots P_l)$ is SLU equivalent to $(Q_{11},Q_{12},Q_2,\cdots Q_l)$, and thus the ``Only if'' part holds.

By adjusting the eigenvalues of $\rho,\sigma$ we may construct two states in $\cD_{\lambda}$ as:
\beq
\label{eq:kprodv-5}
\bal
\rho'&=\mu_{11} P_{11}+\mu_{12} P_{12}+\sum_{j=2}^l\mu_j P_j, \\
\sigma'&=\mu_{11} Q_{11}+\mu_{12} Q_{12}+\sum_{j=2}^l\mu_j Q_j,
\eal
\eeq
where the coefficients $\mu_{11},\mu_{12},\mu_2,\cdots\mu_l$ are all positive. Because $P_{11}$ is orthogonal to $P_{12}$ and $Q_{11}$ is orthogonal to $Q_{12}$, the decompositions given by Eq. \eqref{eq:kprodv-5} are spectral decompositions of $\rho'$ and $\sigma'$. It is known that $\rho',\sigma'$ are included in $\cD_{\lambda}$, as $\cR(P_{12})$ and $\cR(Q_{12})$ both contain no product vector. By Lemma \ref{le:spectral-dec} we conclude that the LU equivalence between $\rho$ and $\sigma$ is simultaneous with that between $\rho'$ and $\sigma'$. The LU equivalence between $\rho'$ and $\sigma'$ can be decided when $\cD_{\lambda}$ is LU decidable. Therefore, we have shown that whether $\rho$ and $\sigma$ are LU equivalent or not is decidable if $\cD_{\lambda}$ is LU decidable, i.e. $\overline{\cD_{\lambda}}\preceq\cD_{\lambda}$.

To sum up, the relation $\overline{\cD_{\lambda}}\xlongequal{d}\cD_{\lambda}$ holds. This completes the proof.
\end{proof}

Finally, we show two special cases where the LU equivalence between two normalized states can be determined by specific eigenvalues of the partial transposes of their density matrices, as follows. It is known from Ref. \cite{Neigenv2013} that for any normalized bipartite states with trace $1$, all eigenvalues of the partial transpose of its density matrix lie within $[-1/2,1]$. Here, we consider two extreme cases of maximum and minimum eigenvalues, and respectively identify all states whose partial transposes have the maximum eigenvalue $1$ and the minimum eigenvalue $-1/2$. 

\begin{proposition}
\label{le:eigrange}
For a normalized bipartite state, the minimum eigenvalue of its partial transpose saturates the lower bound $-\frac{1}{2}$ if and only if the corresponding state is a two-qubit maximally entangled state; the maximum eigenvalue of its partial transpose saturates the upper bound $1$ if and only if the corresponding state is a pure product state.
\end{proposition}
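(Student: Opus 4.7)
The plan is to reduce the problem to a spectral computation for the partial transpose of a rank-one projector, using the duality $\langle\psi|\rho^\G|\psi\rangle=\tr(\rho\,\ketbra{\psi}^\G)$. Writing the Schmidt decomposition $\ket\psi=\sum_k\sqrt{\lambda_k}\ket{e_k,f_k}$, a direct block-by-block computation shows that $\ketbra{\psi}^\G$ has eigenvalue $\lambda_k$ on each $\ket{e_k,f_k}$ and, on each two-dimensional span $\{\ket{e_k,f_l},\ket{e_l,f_k}\}$ for $k<l$, acts as the off-diagonal matrix whose non-zero entries equal $\sqrt{\lambda_k\lambda_l}$, contributing eigenvalues $\pm\sqrt{\lambda_k\lambda_l}$. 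Under the normalization $\sum_k\lambda_k=1$ I obtain $\lambda_{\max}(\ketbra{\psi}^\G)=\max_k\lambda_k\le 1$, saturated iff $\ket\psi$ is a product vector, and $\lambda_{\min}(\ketbra{\psi}^\G)=-\max_{k<l}\sqrt{\lambda_k\lambda_l}\ge -\tfrac{1}{2}$ by AM--GM, saturated iff exactly two Schmidt coefficients equal $\tfrac{1}{2}$, i.e.\ $\ket\psi$ is a two-qubit maximally entangled state.

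For the upper bound, I would assume $\lambda_{\max}(\rho^\G)=1$ with unit eigenvector $\ket\psi$, so that $1=\tr(\rho\,\ketbra{\psi}^\G)\le\lambda_{\max}(\ketbra{\psi}^\G)\tr(\rho)\le 1$. Equality throughout forces $\ket\psi=\ket{a,b}$ to be a product vector and $\rho$ to be supported on the $1$-eigenspace of $\ketbra{a,b}^\G=\ketbra{a}\ox\ketbra{b^*}$, which is the one-dimensional line spanned by $\ket{a,b^*}$. Thus $\rho=\ketbra{a,b^*}$ is a pure product state, and the converse is an immediate calculation. For the lower bound, I would assume $\lambda_{\min}(\rho^\G)=-\tfrac{1}{2}$ with unit eigenvector $\ket\psi$; then $-\tfrac{1}{2}=\tr(\rho\,\ketbra{\psi}^\G)\ge\lambda_{\min}(\ketbra{\psi}^\G)\ge -\tfrac{1}{2}$ forces $\ket\psi=\tfrac{1}{\sqrt{2}}(\ket{e_1,f_1}+\ket{e_2,f_2})$ to be a two-qubit maximally entangled state supported on some $\bbC^2\ox\bbC^2$ block, on which $\ketbra{\psi}^\G=\tfrac{1}{2}F$ with $F$ the swap. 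The $-\tfrac{1}{2}$-eigenspace is the one-dimensional antisymmetric line spanned by $\tfrac{1}{\sqrt{2}}(\ket{e_1,f_2}-\ket{e_2,f_1})$, so $\rho$ equals the projector onto this line and is itself a two-qubit maximally entangled state; again the converse is immediate.

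The main obstacle is coupling the two saturation conditions: the spectral bound on $\ketbra{\psi}^\G$ pins down the Schmidt structure of $\ket\psi$, while the trace-plus-positivity inequality $\tr(\rho A)\ge\lambda_{\min}(A)\tr(\rho)$ (and its $\lambda_{\max}$ counterpart, applied with $A=\ketbra{\psi}^\G$) pins down the support of $\rho$. Once both equalities are read off simultaneously, the one-dimensionality of the relevant extremal eigenspace of $\ketbra{\psi}^\G$ at saturation automatically forces $\rho$ into the claimed form, so the structural conclusion follows without any case analysis on the ambient dimensions $m,n$.
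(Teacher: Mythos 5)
Your proposal is correct, and its treatment of the mixed-state case is genuinely different from the paper's. The pure-state computation is the same in both: diagonalize $\ketbra{\psi}^\Gamma$ block-by-block in the Schmidt basis to get eigenvalues $\lambda_k$ and $\pm\sqrt{\lambda_k\lambda_l}$, then read off the saturation conditions. For mixed states, however, the paper expands $\rho=\sum_k p_k\ketbra{\psi_k}$ and invokes the Weyl-type inequality for eigenvalues of sums of Hermitian matrices (Horn--Johnson, Cor.~4.3.15), whose equality case demands a \emph{common} extremal eigenvector of all the $\ketbra{\psi_k}^\Gamma$; this is what forces every $\ket{\psi_k}$ to be the same maximally entangled (resp.\ product) state. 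You instead fix a single extremal eigenvector $\ket{\psi}$ of $\rho^\Gamma$ and use the self-adjointness of the partial transpose under the Hilbert--Schmidt pairing, $\langle\psi|\rho^\Gamma|\psi\rangle=\tr\big(\rho\,\ketbra{\psi}^\Gamma\big)$, together with the elementary bound $\lambda_{\min}(A)\tr\rho\le\tr(\rho A)\le\lambda_{\max}(A)\tr\rho$ and its support-based equality condition. This buys you three things: the argument is self-contained (no external matrix-analysis lemma), it re-derives the bounds $-\tfrac12\le\lambda\le1$ along the way rather than presupposing them, and the one-dimensionality of the extremal eigenspace of $\ketbra{\psi}^\Gamma$ at saturation pins down $\rho$ in a single step instead of constraining each ensemble member separately. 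The only cosmetic issue is the complex conjugation introduced by the partial transpose (your extremal eigenvectors should carry $\ket{f_k^*}$ rather than $\ket{f_k}$), but this does not affect either structural conclusion.
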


The proof of Proposition \ref{le:eigrange} is provided in Appendix \ref{sec:proof1}. Since all two-qubit maximally entangled states are LU equivalent, and all pure product states are LU equivalent, we directly know that the states whose partial transposes share the eigenvalue $-1/2$ are all LU equivalent, and the the states whose partial transposes share the eigenvalue $1$ are all LU equivalent, by Proposition \ref{le:eigrange}.

\section{SLU equivalence between tuples of bipartite projectors}
\label{sec:rangelu}

In this section we consider how to determine whether two tuples of projectors are SLU equivalent or not. From Lemma \ref{le:spectral-dec} we realize that the SLU equivalence between tuples of mutually orthogonal projectors is essential for LU equivalent Hermitian operators, by analyzing their spectral decompositions. By Example \ref{ex:slu-c}, we show that for two tuples each of which contains more than two projectors, the partial SLU equivalence between parts of such two tuples is necessary but not sufficient to ensure the overall SLU equivalence between the two entire tuples. In Theorem \ref{le:slu}, we propose a necessary and sufficient condition on bipartite states in terms of the twirling operation, such that two tuples of states are SLU equivalent. 

To further emphasize the key role of SLU equivalence, we also consider the SLU equivalence between two tuples of subspaces from the ranges of projectors, i.e. the eigenspaces.
Considering two bipartite states $\rho\sim\sigma$, it follows directly that their ranges admit the relation $\cR(\rho)=(U\ox V)\cR(\sigma)$ for some local unitary $U\ox V$. In other words, if there exists a $\ket{\alpha}\in\cR(\rho)$ such that $(U\ox V)\ket{\alpha}\notin\cR(\sigma)$ for any local unitary $U\ox V$, one can decide $\rho\nsim\sigma$. Conversely, the relation $\cR(\rho)=(U\ox V)\cR(\sigma)$ is not sufficient to ensure $\rho\sim\sigma$. We propose a pair of states as below:
\beq
\label{eq:crlu-1}
\bal
\rho&:=\frac{1}{2}\ketbra{00}+\frac{1}{4}\ketbra{01}+\frac{1}{8}(\ketbra{10}+\ketbra{11}), \\
\sigma&:=\frac{1}{2}\ketbra{\phi_+}+\frac{1}{4}\ketbra{\phi_-}+\frac{1}{8}(\ketbra{01}+\ketbra{10}),
\eal
\eeq
where $\ket{\phi_+},\ket{\phi_-}$ are a pair of Bell states.
Since $\rho,\sigma$ both have full rank, their ranges are both $\bbC^2\ox\bbC^2$, and thus the relation $\cR(\rho)=(U\ox V)\cR(\sigma)$ is naturally satisfied. However, the eigenvector $\ket{00}$ of $\rho$ is not LU equivalent to the eigenvector $\ket{\phi_+}$ of $\sigma$, corresponding to the same eigenvalue $\frac{1}{2}$. By Lemma \ref{le:spectral-dec} we conclude that $\rho$ and $\sigma$ are LU inequivalent. 
It motivates us to further consider how to determine the LU equivalence between two bipartite states by their eigenspaces. A determination criterion for $\rho\sim\sigma$ is presented as below.

\begin{lemma}
\label{le:crlu-1}
Suppose that two bipartite states $\rho,\sigma$ have spectral decompositions as $\rho=\sum_{j=1}^r \lambda_j P_j$ and $\sigma=\sum_{j=1}^r \lambda_j Q_j$. Then $\rho\sim\sigma$, if and only if for each $j$, $\cR(P_j)=(U\ox V)\cR(Q_j)$ by a common LU operator $U\ox V$.
\end{lemma}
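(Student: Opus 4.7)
The plan is to reduce this lemma to Lemma \ref{le:spectral-dec} by exploiting the elementary linear-algebraic fact that an orthogonal projector on a Hilbert space is uniquely determined by its range. For the ``only if'' direction, I would apply Lemma \ref{le:spectral-dec} directly: since $\rho$ and $\sigma$ share the matching eigenvalue lists $\{\lambda_j\}_{j=1}^r$ and $\rho\sim\sigma$, the lemma furnishes a common LU operator $U\ox V$ with $(U\ox V)Q_j(U\ox V)^\dg=P_j$ for every $j$. Taking ranges on both sides, and observing that the range of $(U\ox V)Q_j(U\ox V)^\dg$ is exactly $(U\ox V)\cR(Q_j)$, immediately yields $\cR(P_j)=(U\ox V)\cR(Q_j)$ under the same common LU operator.

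For the ``if'' direction, assume that $U\ox V$ is a common LU operator satisfying $\cR(P_j)=(U\ox V)\cR(Q_j)$ for every $j$. Each operator $(U\ox V)Q_j(U\ox V)^\dg$ remains Hermitian and idempotent under unitary conjugation, hence is an orthogonal projector, and its range equals $(U\ox V)\cR(Q_j)=\cR(P_j)$. Since an orthogonal projector is uniquely determined by its range, this forces $(U\ox V)Q_j(U\ox V)^\dg=P_j$ for each $j$. Multiplying by $\lambda_j$ and summing over $j$ then gives $(U\ox V)\sigma(U\ox V)^\dg=\rho$, so $\rho\sim\sigma$ follows directly by definition.

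No substantive obstacle is anticipated; the whole argument rests on the ``range determines orthogonal projector'' identity combined with Lemma \ref{le:spectral-dec}. The single point that merits attention is that the common LU operator $U\ox V$ must act consistently for all $j$ simultaneously, but this SLU-type consistency is exactly what is built into the hypothesis of the ``if'' direction and into Lemma \ref{le:spectral-dec} itself, so no additional reconciliation across different indices $j$ is required. The orthogonality of the families $\{P_j\}$ and $\{Q_j\}$, being preserved under unitary conjugation, is automatically compatible with the per-index equalities derived above.
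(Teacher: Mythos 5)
Your proof is correct and follows essentially the same route as the paper's: both reduce the statement to Lemma \ref{le:spectral-dec} and then translate ``$P_j$ and $Q_j$ are SLU equivalent'' into ``their ranges are SLU equivalent.'' The only difference is in how that translation is justified in the ``if'' direction: the paper takes an orthonormal basis of $\cR(P_j)$, pushes it through $U\ox V$, and invokes the unitary-freedom-of-ensembles result (Lemma \ref{le:u-freedom}) to recover $Q_j$, whereas you simply observe that $(U\ox V)Q_j(U\ox V)^\dg$ is an orthogonal projector with range $(U\ox V)\cR(Q_j)=\cR(P_j)$ and that an orthogonal projector is uniquely determined by its range. Your justification is more direct and avoids the detour through Lemma \ref{le:u-freedom}; the paper itself uses exactly this ``same range implies same projector'' fact elsewhere (in the proof of Proposition \ref{prop:kprodv}), so nothing is lost.
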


We shall show the detailed proof of Lemma \ref{le:crlu-1} in Appendix \ref{sec:proof2}.

In light of the importance of SLU equivalence, we next deeply investigate the conditions on the SLU equivalence relation between two tuples of mutually orthogonal projectors. By virtue of the Schmidt decomposition, such conditions are effective to check the LU equivalence between two arbitrary bipartite Hermitian operators. For two binary tuples of orthogonal projectors $(P_1,P_2)$ and $(Q_1,Q_2)$, the conditions both $P_1\sim P_2$ and $Q_1\sim Q_2$ are generally not sufficient for the SLU equivalence relation between the two binary tuples $(P_1,P_2)$ and $(Q_1,Q_2)$. One may refer to the counterexample formulated in the paragraph below Lemma \ref{le:spectral-dec}. 

We further consider the cases where the tuples contain more than two mutually orthogonal projectors, by focusing on the tuples of three projectors. The cases where the tuples contain more than three mutually orthogonal projectors can be similarly investigated. Suppose that $(P_1,P_2,P_3)$ and $(Q_1,Q_2,Q_3)$ are two tuples of mutually orthogonal projectors, namely $\tr(P_iP_j)=\tr(Q_iQ_j)=\delta_{ij}$. In order to make $(P_1,P_2,P_3)\sim_s (Q_1,Q_2,Q_3)$, we shall assign a necessary precondition that $(P_1,P_2)\sim_s(Q_1,Q_2)$, $(P_1,P_3)\sim_s(Q_1,Q_3)$, and $(P_2,P_3)\sim_s(Q_2,Q_3)$. Otherwise, if one of such three SLU equivalence relations is violated, one can conclude that $(P_1,P_2,P_3)$ is not SLU equivalent to $(Q_1,Q_2,Q_3)$. Then we keep studying whether this precondition is sufficient to ensure $(P_1,P_2,P_3)\sim_s (Q_1,Q_2,Q_3)$ or not.


By the following lemma we may simply consider the two tuples with $P_1=Q_1$ and $P_2=Q_2$. Then the number of projectors to be processed is reduced from six to four.

\begin{lemma}
\label{le:4prjs}
Suppose that $(P_1,P_2,P_3)$ and $(Q_1,Q_2,Q_3)$ are two tuples of mutually orthogonal projectors, which satisfy that $(P_1,P_2)\sim_s(Q_1,Q_2)$, $(P_1,P_3)\sim_s(Q_1,Q_3)$, and $(P_2,P_3)\sim_s(Q_2,Q_3)$. By simultaneously applying any LU operations to $Q_i$'s, namely $\tilde{Q}_i:=(U\ox V)Q_i(U\ox V)^\dg$ for any LU operator $(U\ox V)$, then

(i) the SLU equivalences between $(P_i,P_j)$ and $(\tilde{Q}_i,\tilde{Q}_j)$, for any $1\leq i<j\leq 3$, are all maintained;

(ii) $(P_1,P_2,P_3)$ is SLU equivalent to $(Q_1,Q_2,Q_3)$, if and only if $(P_1,P_2,P_3)$ is SLU equivalent to $ (\tilde{Q}_1,\tilde{Q}_2,\tilde{Q}_3)$.
\end{lemma}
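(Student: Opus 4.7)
The plan is to exploit the fact that SLU equivalence, as given by Definition~\ref{def:slu}, is an equivalence relation induced by the action of the local unitary group on tuples of Hermitian operators. In particular, this relation is symmetric (if $(U\ox V) M_i (U\ox V)^\dg = N_i$, then $(U^\dg \ox V^\dg) N_i (U^\dg \ox V^\dg)^\dg = M_i$) and transitive (the composition of two local unitary operators is again a local unitary operator). Both parts of the lemma will follow directly from this group-action structure, with no new technical machinery required.

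For part (i), I would first note that by the very definition of $\tilde{Q}_i$, for every choice of $i,j$ the pair $(Q_i, Q_j)$ is SLU equivalent to $(\tilde{Q}_i, \tilde{Q}_j)$, realized by the single LU operator $U\ox V$. Combining this with the assumed SLU equivalence $(P_i, P_j) \sim_s (Q_i, Q_j)$ via some LU operator $W_{ij}$, transitivity then yields $(P_i, P_j) \sim_s (\tilde{Q}_i, \tilde{Q}_j)$ realized by the composed LU operator $(U\ox V)\, W_{ij}$. Running this argument over all three pairs $(i,j)\in\{(1,2),(1,3),(2,3)\}$ preserves all pairwise SLU equivalences.

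For part (ii), the ``only if'' direction runs in parallel: if $(P_1,P_2,P_3)\sim_s(Q_1,Q_2,Q_3)$ via some LU operator $W$, then composing with $U\ox V$ sends each $P_i$ simultaneously to $(U\ox V)Q_i(U\ox V)^\dg = \tilde{Q}_i$, so $(P_1,P_2,P_3)\sim_s(\tilde{Q}_1,\tilde{Q}_2,\tilde{Q}_3)$ via $(U\ox V)W$. The ``if'' direction is symmetric: from $(P_1,P_2,P_3)\sim_s(\tilde{Q}_1,\tilde{Q}_2,\tilde{Q}_3)$ via some $W'$, we invert the tilde transformation using $(U^\dg\ox V^\dg)$, so that the composition $(U^\dg\ox V^\dg)W'$ sends each $P_i$ back to $Q_i$, establishing $(P_1,P_2,P_3)\sim_s(Q_1,Q_2,Q_3)$.

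No step here poses a real obstacle; the proof is a routine verification that SLU equivalence respects the group structure of local unitary operators. The conceptual content of the lemma is that when attempting to establish SLU equivalence between two triples of projectors, one is free to precompose with any convenient LU transformation on one of the triples without losing generality. This flexibility is precisely what enables the subsequent reduction to the normalized situation $P_1=Q_1$ and $P_2=Q_2$, cutting the number of projectors that must be handled in the ensuing analysis from six to four.
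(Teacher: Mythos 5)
Your proposal is correct and follows essentially the same route as the paper's own proof: both parts reduce to composing local unitaries, using that the product of two LU operators is again an LU operator, so that pairwise and triple-wise SLU equivalences are transported by the map $Q_i\mapsto\tilde{Q}_i$. The paper simply writes out these compositions explicitly (e.g. $P_1=(U_1U^\dg\ox V_1V^\dg)\tilde{Q}_1(UU_1^\dg\ox VV_1^\dg)$) rather than phrasing them as transitivity of a group action, but the content is identical.
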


We shall prove Lemma \ref{le:4prjs} in Appendix \ref{sec:proof2}. Recall the precondition that $(P_1,P_2)\sim_s(Q_1,Q_2)$, which means there exists an LU operator $U\ox V$, such that $P_1=(U\ox V)Q_1(U\ox V)^\dg$ and $P_2=(U\ox V)Q_2(U\ox V)^\dg$. According to Lemma \ref{le:4prjs}, with the same LU operator $U\ox V$ applying to all $Q_i$'s, we consequently obtain that $(P_1,P_2,P_3)\sim_s(Q_1,Q_2,Q_3)$ if and only if $(P_1,P_2,P_3)\sim_s(P_1,P_2,\tilde{Q}_3)$. Then we may equivalently examine whether there exists the SLU equivalence between $(P_1,P_2,P_3)$ and $(Q_1,Q_2,Q_3)$ with $P_1=Q_1$ and $P_2=Q_2$, under the precondition.


Here, we claim that the precondition mentioned above is not sufficient to ensure $(P_1,P_2,P_3)\sim_s (Q_1,Q_2,Q_3)$, and show our claim by proposing examples as follows.

\begin{example}
\label{ex:slu-c}
By constructing specific examples, we conclude that the two tuples of mutually orthogonal projectors, $(P_1,P_2,P_3)$ and $(Q_1,Q_2,Q_3)$, may not be SLU equivalent, even if $(P_i,P_j)\sim_s(Q_i,Q_j)$ holds for any $1\leq i<j\leq 3$. According to the analysis above, we may assume that $P_1=Q_1$ and $P_2=Q_2$. Then the four distinct projectors supported on $\bbC^4\ox\bbC^4$ are formulated as below:
\beq
\label{eq:cex-1}
\bal
P_1=Q_1&=\proj{0}\ox\diag (1,1,0,0), \\
P_2=Q_2&=\proj{1}\ox\diag (1,0,1,0), \\
P_3&=\proj{2}\ox\diag (1,0,0,1), \\
Q_3&=\proj{2}\ox\diag (0,1,1,0).
\eal
\eeq
First, due to $P_1=Q_1$ and $P_2=Q_2$, the relation $(P_1,P_2)\sim_s(Q_1,Q_2)$ naturally holds by the LU matrix $\I\ox\I$. Second, one can verify that $(P_1,P_3)$ is SLU equivalent to $(Q_1,Q_3)$ by the LU matrix as
\beq
\label{eq:cex-2}
\I\ox
\left(\bma
0 & 1 \\
1 & 0
\ema\oplus
\bma
0 & 1 \\
1 & 0
\ema\right).
\eeq
Third, one can verify that $(P_2,P_3)$ is SLU equivalent to $(Q_2,\rho_3)$ by the LU matrix as
\beq
\label{eq:cex-3}
\I\ox
\bma
0 & 0 & 1 & 0 \\
0 & 0 & 0 & 1 \\
1 & 0 & 0 & 0 \\
0 & 1 & 0 & 0 
\ema.
\eeq
So the four projectors satisfy the precondition.
\qed
\end{example}
In the following proof, we show that the two tuples given in Example \ref{ex:slu-c} cannot be SLU equivalent.
\begin{proof}
We show by contradiction that $(P_1,P_2,P_3)$ and $(Q_1,Q_2,Q_3)$ formulated by Eq. \eqref{eq:cex-1} are not SLU equivalent. Assume that $(P_1,P_2,P_3)\sim_s(Q_1,Q_2,Q_3)$, and denote by $U_1\ox U_2$ an arbitrary LU matrix such that $P_j=(U_1\ox U_2)Q_j(U_1\ox U_2)^\dg$ for each $j$. By virtue of Eq. \eqref{eq:cex-1} and observing the second subsystems of $P_1,P_2$, we obtain that 
\beq
\label{eq:cex-4}
\bal
U_2\left(\diag (1,1,0,0)\right)U_2^\dg&=\diag (1,1,0,0), \\
U_2\left(\diag (1,0,1,0)\right)U_2^\dg&=\diag (1,0,1,0).
\eal
\eeq
It means that the unitary matrix $U_2$ keeps both $\diag (1,1,0,0)$ and $\diag (1,0,1,0)$ invariant under the unitary similarity. By Lemma \ref{le:diag-inv} we conclude that $U_2$ must be diagonal, namely $\diag(e^{i\t_1},e^{i\t_2},e^{i\t_3},e^{i\t_4})$. Nevertheless, such diagonal $U_2$ cannot make that $U_2(\diag(1,0,0,1))U_2^\dg=\diag(0,1,1,0)$ holds. It means that $P_3$ cannot be locally transformed to $Q_3$ by $U_1\ox U_2$, which contradicts the assumption. Therefore, the formulated two tuples cannot be SLU equivalent. This completes the proof.
\end{proof}

Example \ref{ex:slu-c} indicates the partial SLU equivalence cannot ensure the overall SLU equivalence. It implies that the problem of determining the SLU equivalence between two arbitrary tuples of mutually orthogonal projectors could become more and more difficult as the number of projectors included in a tuple increases. 
A projector can be regarded as a non-normalized state. Thus, we may derive a more general conclusion that for two tuples of states, the partial SLU equivalence cannot ensure the overall SLU equivalence. Specifically, for two tuples of states denoted by $(\rho_1,\rho_2,\rho_3)$ and $(\sigma_1,\sigma_2,\sigma_3)$, they may not be SLU equivalent, even if $(\rho_i,\rho_j)\sim_s(\sigma_i,\sigma_j)$ holds for any $1\leq i<j\leq 3$. 

Along with this broader perspective by extending projectors to states, we study the SLU equivalence between tuples of states, and propose a necessary and sufficient condition for the SLU equivalence in Theorem \ref{le:slu}. This necessary and sufficient condition is formulated in terms of the twirling operation which has been widely used in the resource theory of asymmetry \cite{ssr2007,rfgour2008,gour2009}. It suffices to assume that $\rho_1=\sigma_1$ for the two tuples $(\rho_1,\rho_2,\rho_3)$ and $(\sigma_1,\sigma_2,\sigma_3)$ with the similar idea given by Lemma \ref{le:4prjs}.

\begin{theorem}
\label{le:slu}
Suppose that $\rho_1=\sigma_1,\rho_2=\sigma_2,\rho_3,\sigma_3$ are normalized states which satisfy that $(\rho_i,\rho_j)\sim_s(\sigma_i,\sigma_j)$ for any $1\leq i<j\leq 3$. Then the tuple $(\rho_1,\rho_2,\rho_3)$ is SLU equivalent to $(\sigma_1,\sigma_2,\sigma_3)$, if and only if there exist two interseted groups $\cU_1$ and $\cU_2$ of LU operators such that:

(i) $\rho_1=\int_{\cU_1}U\alpha U^\dg dU$ for some state $\alpha$, and $\rho_2=\int_{\cU_2}U\beta U^\dg dU$ for some state $\beta$, with the two integrals over the Haar measure;

(ii) $\rho_3=(U\ox V)\sigma_3 (U\ox V)^\dg$ for some $U\ox V\in \cU_1\cap\cU_2$.
\end{theorem}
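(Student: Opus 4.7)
The plan is to reduce the theorem to a standard fact about twirling: for a compact subgroup $\cU$ of LU operators, the equality $\rho = \int_{\cU} U\alpha U^\dg dU$ for some state $\alpha$ is equivalent to the stabilizer condition $V\rho V^\dg = \rho$ for every $V\in\cU$. Indeed, the forward implication follows from the left-invariance of the Haar measure, since $V\rho V^\dg = \int_{\cU}(VU)\alpha(VU)^\dg dU = \int_{\cU} U\alpha U^\dg dU = \rho$; the reverse is immediate upon choosing $\alpha=\rho$. I would state this preliminary observation explicitly and then cast the rest of the argument in terms of LU stabilizers.

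For the ``if'' direction, I would take $\cU_1,\cU_2$ satisfying (i)--(ii). The preliminary observation converts (i) into the statement that every element of $\cU_1$ fixes $\rho_1$ and every element of $\cU_2$ fixes $\rho_2$. Hence any $W = U\ox V \in \cU_1\cap\cU_2$ fulfilling (ii) automatically satisfies $W\rho_1 W^\dg = \rho_1 = \sigma_1$ and $W\rho_2 W^\dg = \rho_2 = \sigma_2$, which together with $W\rho_3 W^\dg = \sigma_3$ exhibits a common LU operator that simultaneously transforms $(\rho_1,\rho_2,\rho_3)$ to $(\sigma_1,\sigma_2,\sigma_3)$.

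For the ``only if'' direction, I would let $W_0 = U_0\ox V_0$ be an LU operator witnessing $(\rho_1,\rho_2,\rho_3)\sim_s(\sigma_1,\sigma_2,\sigma_3)$, and define $\cU_i := \{U\ox V : (U\ox V)\rho_i(U\ox V)^\dg = \rho_i\}$ for $i=1,2$. Each $\cU_i$ is a closed subgroup of the compact group of LU operators (closedness follows from continuity of conjugation), so it is itself a compact group admitting a Haar measure. Because $\rho_1=\sigma_1$ and $\rho_2=\sigma_2$, the witness $W_0$ satisfies $W_0\rho_i W_0^\dg = \sigma_i = \rho_i$ for $i=1,2$, so $W_0\in \cU_1\cap\cU_2$ and hence the intersection is non-empty. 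Condition (ii) is then secured by $W_0\rho_3 W_0^\dg = \sigma_3$, and condition (i) follows from the preliminary observation by taking $\alpha=\rho_1$ and $\beta=\rho_2$.

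The main obstacle is organizational rather than computational: one must state the twirling/stabilizer correspondence in a sufficiently clean form (including the fact that the LU stabilizer of a state is a compact subgroup on which Haar integration is meaningful) so that both directions of the equivalence reduce to purely group-theoretic manipulations. Once this preliminary is in place, there are essentially no further calculations, since the whole theorem is a repackaging of ``a common stabilizing LU operator that additionally matches $\rho_3$ to $\sigma_3$ exists iff the intersection of the two stabilizer groups contains such an element.''
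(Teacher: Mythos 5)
Your proposal is correct and follows essentially the same route as the paper: both arguments reduce the theorem to the observation that condition (i) is equivalent, via the twirling/Haar-averaging characterization, to $\rho_1$ and $\rho_2$ being fixed by every element of $\cU_1$ and $\cU_2$ respectively, after which the common witness of the SLU equivalence is exactly an element of $\cU_1\cap\cU_2$ satisfying (ii). Your explicit remark that the LU stabilizer of a state is a closed (hence compact) subgroup on which Haar integration is well defined is a small but welcome addition that the paper leaves implicit.
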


\begin{proof}
Due to $\rho_1=\sigma_1,\rho_2=\sigma_2$, by definition one can verify that $(\rho_1,\rho_2,\rho_3)$ is SLU equivalent to $(\sigma_1,\sigma_2,\sigma_3)$ if and only if there exists a common LU operator $U\ox V$ which simultaneously commutes with $\rho_1$ and $\rho_2$, and satisfies $\rho_3=(U\ox V)\sigma_3 (U\ox V)^\dg$. Next, we show that the combination of conditions (i) and (ii) is equivalent to this necessary and sufficient condition.

First we show that the LU operators commuting with a bipartite operator form a group under the matrix multiplication. Let $\rho$ be a bipartite operator. Then the identity operator $\I\ox \I$ must commute with $\rho$. Moreover, assume that $\rho$ commutes with two LU operator $U_1\ox V_1$ and $U_2\ox V_2$, i.e., $\rho(U_1\ox V_1)=(U_1\ox V_1)\rho$ and $\rho(U_2\ox V_2)=(U_2\ox V_2)\rho$. One can directly verify that $\rho$ also commutes with the LU operator $(U_1U_2)\ox (V_1V_2)$ as
\beq
\label{eq:slu-c1}
\bal
&\quad\rho\left[(U_1U_2)\ox (V_1V_2)\right]=\rho (U_1\ox V_1) (U_2\ox V_2) \\
&=(U_1\ox V_1)\rho(U_2\ox V_2)=(U_1\ox V_1)(U_2\ox V_2)\rho\\
&=\left[(U_1U_2)\ox (V_1V_2)\right]\rho.
\eal
\eeq
Hence, the LU operators commuting with a bipartite operator form a multiplication group.

Second, for a given LU group $\cU_g$, the bipartite states commuting with $\cU_g$ is denoted by $\mathrm{inv}(\cU_g)$,
\beq
\label{eq:slu-c2}
\mathrm{inv}(\cU_g)\equiv\{ \rho\in\cD(\cH^{AB})~|~\forall~X\in\cU_g:X\rho X^\dg=\rho\}.
\eeq
It is known that the set $\mathrm{inv}(\cU_g)$ can be characterized by the useful twirling operation \cite{rfgour2008}, namely
\beq
\label{eq:slu-c3}
\cT[\rho]\equiv\int_\cU U\rho U^\dg dU,
\eeq
for some unitary group $\cU$. The twirling operation averages over the action of the unitary group $\cU$ with the Haar measure. If the unitary group $\cU$ is finite, one simply replaces the integral with a sum as
\beq
\label{eq:slu-c3.1}
\cT[\rho]\equiv\frac{1}{\abs{\cU}}\sum_{U\in\cU} U\rho U^\dg.
\eeq
By virtue of the twirling operation, we may characterize $\mathrm{inv}(\cU_g)$ as follows:
\beq
\label{eq:slu-4}
\mathrm{inv}(\cU_g)=\left\{\rho\in\cD(\cH^{AB})~|~\cG[\rho]=\rho\right\},
\eeq
where $\cG[\rho]$ is given by
\beq
\label{eq:slu-4.1}
\cG[\rho]=\int_{\cU_g} U\rho U^\dg dU.
\eeq
Since $\cG$ is idempotent, namely $\cG^2\equiv\cG\circ\cG=\cG$, the set $\mathrm{inv}(\cU_g)$ is indeed the image of $\cG$, i.e.,
\beq
\label{eq:slu-4.2}
\mathrm{inv}(\cU_g)=\left\{\rho~|~\rho=\cG[\sigma],~\forall\sigma\in\cD(\cH^{AB})\right\},
\eeq

Suppose that $\cU_1$ and $\cU_2$ are two intersected groups of LU operators. Based on the analysis above, the condition (i) ensures that $\rho_1$ commutes with $\cU_1$, and $\rho_2$ commutes with $\cU_2$. Since $\cU_1\cap \cU_2$ is non-empty, it follows that the elements in $\cU_1\cap \cU_2$ commute with $\rho_1$ and $\rho_2$ simultaneously. Additionally, combining with condition (ii), we conclude that there exists a common LU operator which simultaneously commutes with $\rho_1$ and $\rho_2$, and makes $\rho_3\sim\sigma_3$. Therefore, the addition of conditions (i) and (ii) is necessary and sufficient to $(\rho_1,\rho_2,\rho_3)\sim_s(\sigma_1,\sigma_2,\sigma_3)$ with $\rho_1=\sigma_1,\rho_2=\sigma_2$.

This completes the proof.
\end{proof}

By viewing projectors as special states, one may similarly derive a necessary and sufficient condition for the SLU equivalence between tuples of projectors.

\section{Concluding remarks}
\label{sec:con}

In this paper we introduced a relation in terms of the decidabilities of LU equivalence (also known as the LU decidabilities for short) between different sets of bipartite Hermitian operators, and connected the LU decidabilities for the set of EWs and several sets of states. This idea may shed new light on the problem of determining whether two arbitrary states are LU equivalent or not.

By virtue of the relation ``$\prec$'' defined by Definition \ref{def:decide}, we compared the LU decidability of the set of EWs with that of several sets of states. Based on the comparisons we established a hierarchy on LU decidabilities for these sets. The hierarchy has been illustrated by Fig. \ref{fig:1}. By Theorem \ref{cor:drelation} we have shown the relations of LU decidability among the sets of states classified by the PPT criterion, and the set of EWs. Then we proposed a set of states satisfying that at least one eigenspace contains no product vector, denoted by $\cD_{\lambda}$. We indicated in Theorem \ref{thm:mainrelation} that the two LU decidabilities for $\cD_{\lambda}$ and the set of EWs are equal.
We further extended our study to a larger set including $\cD_{\lambda}$, each of whose states satisfies that there is one eigenspace not spanned by product vectors. We refer to such a set as $\overline{\cD_{\lambda}}$, and have shown that the LU decidability of $\overline{\cD_{\lambda}}$ is equal to that of $\cD_{\lambda}$ by Proposition \ref{prop:kprodv}.
By analyzing spectral decompositions, we realized that the SLU equivalence is crucial to LU equivalent operators. In light of this, we studied the SLU equivalence between tuples of mutually orthogonal projectors. By Example \ref{ex:slu-c} we revealed that for two tuples of mutually orthogonal projectors, the partial SLU equivalence cannot ensure the overall SLU equivalence. For this reason, we provided a necessary and sufficient condition in Theorem \ref{le:slu}, such that two tuples of bipartite states are SLU equivalent.

There are some interesting problems remaining for future work. One direction is to complement the relations on the LU decidability for the sets appearing in Fig. \ref{fig:1}. Another direction is to further characterize the SLU equivalent tuples of mutually orthogonal projectors, in order to propose more efficient criteria of determining whether two states are LU equivalent or not.

\section*{acknowledgements}

This work is funded by the NNSF of China (Grant Nos. 12401597, 12471427), the Basic Research Program of Jiangsu (Grant No. BK20241603), and the Wuxi Science and Technology Development Fund Project (Grant No. K20231008). 

\appendix

\section{Proof of Proposition \ref{le:eigrange}}
\label{sec:proof1}

Here, we present a detailed proof of Proposition \ref{le:eigrange}. For this purpose, we first have to recall the following known conclusion in matrix theory.

\begin{lemma}\cite[Corollary 4.3.15.]{bookmatrix}
\label{le:eigenineq}
Let $A,B\in\cM_n$ be Hermitian. The eigenvalues here are arranged in algebraically nondecreasing order. Then
\beq
\label{eq:eigenineq-1}
\lambda_i(A)+\lambda_1(B)\leq\lambda_i(A+B)\leq\lambda_i(A)+\lambda_n(B), ~~i=1,\cdots,n
\eeq
with equality in the upper bound if and only if there is nonzero vector $\ket{x}$ such that 
\beq
\label{eq:eigenineq-2}
\bal
&A\ket{x}=\lambda_i(A)\ket{x},~~B\ket{x}=\lambda_n(B)\ket{x},\\
\text{and} ~~&(A+B)\ket{x}=\lambda_i(A+B)\ket{x};
\eal
\eeq
equality in the lower bound occurs if and only if there is nonzero vector $\ket{x}$ such that 
\beq
\label{eq:eigenineq-3}
\bal
&A\ket{x}=\lambda_i(A)\ket{x},~~ B\ket{x}=\lambda_1(B)\ket{x},\\
\text{and}~~ &(A+B)\ket{x}=\lambda_i(A+B)\ket{x}.
\eal
\eeq
If $A$ and $B$ have no common eigenvector, then every inequality in \eqref{eq:eigenineq-1} is a strict inequality.
\end{lemma}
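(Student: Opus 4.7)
The plan is to derive the lemma from the Courant--Fischer min--max characterization of eigenvalues of Hermitian matrices. With the nondecreasing convention $\lambda_1(H)\le\cdots\le\lambda_n(H)$ used in the lemma, Courant--Fischer gives two dual expressions,
\[
\lambda_i(H) \;=\; \min_{\dim V = i}\;\max_{x\in V,\,\|x\|=1}\langle x,Hx\rangle \;=\; \max_{\dim V = n-i+1}\;\min_{x\in V,\,\|x\|=1}\langle x,Hx\rangle,
\]
which, combined with the elementary Rayleigh-quotient bounds $\lambda_1(B)\le\langle x,Bx\rangle\le\lambda_n(B)$, will drive the whole argument.

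First I would establish the two inequalities in \eqref{eq:eigenineq-1}. For the upper bound, apply the min--max form to $A+B$: for every unit $x$ we have $\langle x,(A+B)x\rangle\le\langle x,Ax\rangle+\lambda_n(B)$, so taking the maximum over any fixed $i$-dimensional $V$ and then the minimum over $V$ yields $\lambda_i(A+B)\le\lambda_i(A)+\lambda_n(B)$. The lower bound $\lambda_i(A+B)\ge\lambda_i(A)+\lambda_1(B)$ is the mirror argument using the max--min form and $\langle x,Bx\rangle\ge\lambda_1(B)$.

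For the equality characterizations I would trace where slack can disappear in the chain of inequalities. Suppose $\lambda_i(A+B)=\lambda_i(A)+\lambda_n(B)$ and let $V^{\ast}$ be the span of eigenvectors of $A$ for $\lambda_1(A),\ldots,\lambda_i(A)$, an optimal subspace for the min--max formula for $\lambda_i(A)$. Equality in the derivation above then forces the existence of a unit $x\in V^{\ast}$ at which all three bounds are simultaneously saturated: $\langle x,Ax\rangle=\lambda_i(A)$, $\langle x,Bx\rangle=\lambda_n(B)$, and $\langle x,(A+B)x\rangle=\lambda_i(A+B)$. Saturation of a Rayleigh quotient at an extreme eigenvalue forces the vector to lie in the corresponding eigenspace, so $x$ must be a common eigenvector of $A$, $B$ and $A+B$ with the stated eigenvalues, which is precisely \eqref{eq:eigenineq-2}. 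The converse direction is immediate, since plugging such a common eigenvector into the Rayleigh quotient attains the upper bound. The lower-bound case \eqref{eq:eigenineq-3} is handled by the symmetric argument using the max--min form and $\lambda_1(B)$, and the final assertion (strict inequality everywhere when $A$ and $B$ have no common eigenvector) is the contrapositive of the two equality characterizations.

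The main obstacle is the ``only if'' direction of the equality case. Saturation of $\langle x,Ax\rangle$ inside $V^{\ast}$ a priori only forces $x$ to be an eigenvector of $A$ for \emph{some} eigenvalue among $\lambda_1(A),\ldots,\lambda_i(A)$, not necessarily $\lambda_i(A)$ itself, so one must combine this with the simultaneous saturation of the $A+B$ Rayleigh quotient at $\lambda_i(A+B)$ and a careful dimension/intersection count to pin down the correct eigenvalue. I would handle this by exploiting the arithmetic identity $\lambda_i(A+B)=\lambda_i(A)+\lambda_n(B)$ together with the two Rayleigh saturations to rule out the smaller $A$-eigenvalues, completing the simultaneous-eigenvector conclusion.
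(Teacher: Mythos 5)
Your proof is correct, but there is nothing in the paper to compare it against: the paper does not prove this lemma, it imports it verbatim from Horn and Johnson \cite[Corollary 4.3.15]{bookmatrix} and uses it as a black box in the proof of Proposition \ref{le:eigrange}. Judged on its own, your Courant--Fischer argument is a valid, self-contained derivation (essentially the standard textbook proof of Weyl's inequality with its equality conditions), and the contrapositive reading of the two equality characterizations does give the final strict-inequality assertion. One remark: the ``main obstacle'' you flag at the end is not actually an obstacle. Once you fix $V^{*}$ as the span of eigenvectors of $A$ for $\lambda_1(A),\dots,\lambda_i(A)$ and take a unit maximizer $x^{*}\in V^{*}$ of the $(A+B)$-Rayleigh quotient, equality in the chain forces $\langle x^{*},Ax^{*}\rangle=\lambda_i(A)$, which is the \emph{maximum} of the Rayleigh quotient of $A$ restricted to $V^{*}$; expanding $x^{*}$ in an eigenbasis of $A|_{V^{*}}$ shows that any component along an eigenvalue strictly below $\lambda_i(A)$ would make that quotient strictly smaller, so $x^{*}$ lies in the $\lambda_i(A)$-eigenspace automatically. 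No dimension/intersection count and no appeal to the arithmetic identity are needed to ``rule out the smaller $A$-eigenvalues.'' Likewise, the converse is cleanest as a scalar identity rather than a Rayleigh-quotient computation: the three eigen-equations give $(A+B)x=(\lambda_i(A)+\lambda_n(B))x=\lambda_i(A+B)x$ with $x\neq 0$, hence $\lambda_i(A+B)=\lambda_i(A)+\lambda_n(B)$. With those two simplifications your argument is complete as written.
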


Now we are ready to show Proposition \ref{le:eigrange} in detail. Note that Proposition \ref{le:eigrange} is a more refined conclusion based on \cite[Theorem 2.]{Neigenv2013}, where we completely identify the two extreme cases of minimum and maximum eigenvalues. For the proof of Proposition \ref{le:eigrange} to be self-consistent, we shall follow the proof of \cite[Theorem 2.]{Neigenv2013} to complete our proof.

\textbf{Proof of Proposition \ref{le:eigrange}.}
We first consider a pure state $\ket{\psi}$ with Schmidt decomposition as
\beq
\label{eq:pure-schdec}
\ket{\psi}=\sum_{i=1}^d \lambda_i \ket{ii},~~\lambda_i>0,~~\sum_{i=1}^d \lambda_i^2 =1.
\eeq
By computing, its PT is given by 
\beq
\label{eq:pure-pt-dec}
\ketbra{\psi}^\Gamma=\sum_{i,j=1}^d \lambda_i\lambda_j \ket{ij}\bra{ji}.
\eeq
One can verify that $\ket{ii}$ and $\ket{ij}\pm\ket{ji}$ are the eigenvectors of $\ketbra{\psi}^\Gamma$ with the corresponding eigenvalues
\beq
\label{eq:pure-pt-eig}
\bal
\lambda_i^2,\quad &\forall i=1,\cdots,d,\\
\pm \lambda_i\lambda_j,\quad &\forall 1\leq i<j\leq d.
\eal
\eeq
With the restriction $\sum_{i=1}^d\lambda_i^2=1$, one can verify that the following inequality holds,
\beq
-\frac{1}{2}\leq \lambda_{\min}(\ketbra{\psi}^\Gamma)\leq \lambda_{\max}(\ketbra{\psi}^\Gamma)\leq 1.
\eeq
Furthermore, $\lambda_{\min}$ reaches $-\frac{1}{2}$ if and only if $\lambda_1^2=\lambda_2^2=\frac{1}{2}$. In other words, $\lambda_{\min}(\ketbra{\psi}^\Gamma)=-\frac{1}{2}$ if and only if $\ket{\psi}$ is a two-qubit maximally entangled state. For the upper bound, one may check that $\lambda_{\max}(\ketbra{\psi}^\Gamma)=1$ if and only if $\ket{\psi}$ is a product state. Thus, no pure state can saturate both the bounds. 

We next consider a mixed state $\rho$ whose spectral decomposition is given by
\beq
\label{eq:mix-specdec}
\rho=\sum_k p_k\ketbra{\psi_k}.
\eeq
It follows that its PT $\rho^\Gamma$ is expressed as 
\beq
\label{eq:mix-pt-dec}
\rho^\Gamma=\sum_k p_k\ketbra{\psi_k}^\Gamma.
\eeq
Then we have 
\beq
\label{eq:lmin-1}
\bal
\lambda_{\min}(\rho^\Gamma)&\geq\sum_k\lambda_{\min}\left(p_k\ketbra{\psi_k}^\Gamma\right) \\
&\geq\sum_k p_k (-\frac{1}{2}) \\
&=-\frac{1}{2},
\eal
\eeq
where the first inequality follows from the lower bound in Eq. \eqref{eq:eigenineq-1} of Lemma \ref{le:eigenineq}. By observing Eq. \eqref{eq:lmin-1}, $\lambda_{\min}(\rho^\Gamma)=-\frac{1}{2}$ if and only if the two inequalities in \eqref{eq:lmin-1} both become equalities. The second equality occurs if and only if each $\ket{\psi_k}$ has Schmidt rank two and equal Schmidt coefficients. More specifically, by Lemma \ref{le:eigenineq}, we conclude that the first equality occurs if and only if there is a common vector $\ket{x}$ such that $\left(\ketbra{\psi_k}^\Gamma\right)\ket{x}=-\frac{1}{2}\ket{x}$ for each $k$. According to the analysis of the PT of a pure state, the common vector $\ket{x}$ has to be $\ket{01}-\ket{10}$, and each $\ket{\psi_k}$ has to be $\frac{1}{\sqrt 2}(\ket{00}+\ket{11})$. Thus, $\lambda_{\min}(\rho^\Gamma)=-\frac{1}{2}$ if and only if $\rho$ is a two-qubit maximally entangled state. Similarly, utilizing the dual inequality for $\lambda_{\max}$, we have 
\beq
\label{eq:lmax-1}
\bal
\lambda_{\max}(\rho^\Gamma)&\leq\sum_k\lambda_{\max}\left(p_k\ketbra{\psi_k}^\Gamma\right) \\
&\leq \sum_k p_k\cdot 1 \\
&=1,
\eal
\eeq
where the first inequality follows from the upper bound in Eq. \eqref{eq:eigenineq-1} of Lemma \ref{le:eigenineq}. By observing Eq. \eqref{eq:lmax-1}, $\lambda_{\max}(\rho^\Gamma)=1$ if and only if the two inequalities in \eqref{eq:lmax-1} both become equalities. The second equality occurs if and only if each $\ket{\psi_k}$ is a pure product state, namely $\ket{a_k,b_k}$. By Lemma \ref{le:eigenineq}, we conclude that the first equality occurs if and only if there is a common vector $\ket{x}$ such that $\left(\ketbra{\psi_k}^\Gamma\right)\ket{x}=\ket{x}$ for each $k$. It implies that each product state $\ket{\psi_k}$ has to be the same. Thus, $\lambda_{\max}(\rho^\Gamma)=1$ if and only if $\rho$ is a pure product state, namely $\ketbra{a,b}$. 

This completes the proof.
\qed

\section{Proofs of results in Sec. \ref{sec:rangelu}.}
\label{sec:proof2}

In this section, we show some results in Sec. \ref{sec:rangelu} in detail. First, we give the proof of Lemma \ref{le:crlu-1}. As preliminaries, we present the following lemma which allows the unitary freedom in the ensemble for density matrices.

\begin{lemma}
\label{le:u-freedom}
A density matrix can be decomposed as $\rho=\sum_i\ketbra{\tilde{\psi}_i}=\sum_i\ketbra{\tilde{\varphi}_i}$ if and only if 
\beq
\label{eq:u-freedom-main}
\ket{\tilde{\psi}_i}=\sum_j u_{ij}\ket{\tilde{\varphi}_j},
\eeq
where $u_{ij}$ is the $(i,j)$ element of a unitary matrix. 
\end{lemma}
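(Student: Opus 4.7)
\medskip

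\noindent\textbf{Proof plan.} I would treat the two directions asymmetrically. The ``if'' direction is a one-line calculation. Assuming $\ket{\tilde\psi_i}=\sum_j u_{ij}\ket{\tilde\varphi_j}$ for a unitary $(u_{ij})$, I would expand
\beq
\sum_i\ketbra{\tilde\psi_i}=\sum_{j,k}\Big(\sum_i u_{ij}\overline{u_{ik}}\Big)\ket{\tilde\varphi_j}\bra{\tilde\varphi_k}
=\sum_{j,k}\delta_{jk}\ket{\tilde\varphi_j}\bra{\tilde\varphi_k}=\sum_j\ketbra{\tilde\varphi_j},
\eeq
using $u^\dg u=I$. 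Before doing even this, I would pad the two ensembles with zero vectors so that they have the same number of terms, which lets us speak of square matrices $(u_{ij})$ without loss of generality.

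\medskip

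\noindent For the ``only if'' direction, the strategy is to pass through a canonical decomposition coming from the spectral theorem and then transitively relate the two given decompositions. Let $\rho=\sum_{k=1}^r \lambda_k\ketbra{k}$ be the spectral decomposition with $\lambda_k>0$, and set $\ket{\hat k}:=\sqrt{\lambda_k}\ket{k}$, so that $\rho=\sum_k\ketbra{\hat k}$. Since $\ketbra{\tilde\psi_i}\leq\rho$, each $\ket{\tilde\psi_i}$ lies in $\cR(\rho)=\lin\{\ket{k}\}$, and I can write $\ket{\tilde\psi_i}=\sum_k c_{ik}\ket{\hat k}$ uniquely. Substituting into $\rho=\sum_i\ketbra{\tilde\psi_i}$ and comparing with $\rho=\sum_k\ketbra{\hat k}$ in the orthonormal basis $\{\ket{k}\}$ yields
\beq
\sum_i c_{ik}\,\overline{c_{il}}=\delta_{kl},
\eeq
i.e.\ the columns of the matrix $C=(c_{ik})$ are orthonormal. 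After padding with additional zero vectors $\ket{\tilde\psi_i}=0$ as needed, $C$ becomes a square matrix whose columns form an orthonormal basis, so it can be extended (or is already) a unitary; call it $U$.

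\medskip

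\noindent The analogous argument for the other ensemble gives $\ket{\tilde\varphi_j}=\sum_k d_{jk}\ket{\hat k}$ with $(d_{jk})$ the entries of another unitary $V$. Then $\ket{\tilde\psi_i}=\sum_j(UV^\dg)_{ij}\ket{\tilde\varphi_j}$, and $u:=UV^\dg$ is unitary as a product of two unitaries. This finishes the ``only if'' part.

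\medskip

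\noindent The main delicate point, and the only real obstacle, is the bookkeeping around padding with zero vectors so that $C$ (and hence $u$) really is square/unitary when the two given ensembles have different cardinalities or contain vectors outside the spectral rank count. Once that is handled cleanly by appending zero vectors to the shorter list and completing the orthonormal columns of $C$ to a unitary, the rest is direct linear algebra.
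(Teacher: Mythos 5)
Your proof is correct, and it is the standard argument (the classical unitary-freedom/Hughston--Jozsa--Wootters theorem): route both ensembles through the canonical decomposition $\rho=\sum_k\ketbra{\hat k}$ with $\ket{\hat k}=\sqrt{\lambda_k}\ket{k}$, show the coefficient matrices have orthonormal columns, extend them to unitaries, and compose. The paper offers no proof to compare against --- Lemma~\ref{le:u-freedom} is stated in Appendix~\ref{sec:proof2} as a known preliminary and used without proof --- so there is no question of a different route. The only point worth tightening in your write-up is the padding bookkeeping: appending zero vectors $\ket{\tilde\psi_i}=0$ adds zero \emph{rows} to $C$, whereas making $C$ square and unitary requires appending $N-r$ orthonormal \emph{columns}, and for the relation $\ket{\tilde\psi_i}=\sum_k U_{ik}\ket{\hat k}$ to survive this extension you must correspondingly append $N-r$ zero vectors to the list $\{\ket{\hat k}\}$; you flag this as the delicate step, and once stated this way it goes through without trouble.
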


With Lemma \ref{le:u-freedom} we are ready to prove Lemma \ref{le:crlu-1} as below.

\textbf{Proof of Lemma \ref{le:crlu-1}.}
For two bipartite states $\rho,\sigma$ with spectral decompositions as $\rho=\sum_{j=1}^r \lambda_j P_j$ and $\sigma=\sum_{j=1}^r \lambda_j Q_j$, it follows from Lemma \ref{le:spectral-dec} that $\rho\sim\sigma$ if and only if for each $j$, $P_j$ and $Q_j$ are SLU equivalent, i.e. LU equivalent by a common LU operator. We claim that for two given projectors $P,Q$, $P\sim Q$ if and only if their ranges $\cR(P),\cR(Q)$ are LU equivalent, i.e. $U\ox V$ such that $(U\ox V)\cR(P)=\cR(Q)$ for some LU operator $U\ox V$. The ``Only if'' part of this claim can be verified directly. We next show the ``If'' part of this claim. Let $\{\ket{a_1},\cdots,\ket{a_r}\}$ be an orthonormal basis of $\cR(P)$. According to the condition $(U\ox V)\cR(P)=\cR(Q)$, it follows that $(U\ox V)\ket{a_j}\in\cR(Q)$ for each $j$. Denote that $\ket{b_j}=(U\ox V)\ket{a_j}$ for each $j$, and thus $\cR(Q)=\lin\{\ket{b_1},\cdots,\ket{b_r}\}$. Applying Lemma \ref{le:u-freedom}, we know that $Q$ has the following decomposition:
\beq
\label{eq:wdec-1}
Q=\sum_{j=1}^r (\sum_{k=1}^r u_{jk}\ket{b_k})(\sum_{l=1}^r u_{jl}^*\bra{b_l}),
\eeq
where $u_{jk}$ is the $(j,k)$ element of a unitary matrix. It follows by simplifying Eq. \eqref{eq:wdec-1} that $Q$ has the spectral decomposition as $Q=\sum_{j=1}^r \ketbra{b_j}$. Due to $\ket{b_j}=(U\ox V)\ket{a_j}$ for each $j$, we conclude that $P\sim Q$. By the claim above, we conclude that $\rho\sim\sigma$ if and only if for each $j$, the ranges of $P_j,Q_j$ are SLU equivalent, i.e. $\cR(P_j)=(U\ox V)\cR(Q_j)$ by a common LU operator $U\ox V$ for each $j$. This completes the proof.
\qed

Second, we show Lemma \ref{le:4prjs} in detail as below.

\textbf{Proof of Lemma \ref{le:4prjs}.}
According to the precondition, we may assume that
\beq
\label{eq:pslu-1}
\bal
P_1&=(U_1\ox V_1)Q_1(U_1\ox V_1)^\dg,\\
P_2&=(U_1\ox V_1)Q_2(U_1\ox V_1)^\dg,
\eal
\eeq
\beq
\label{eq:pslu-2}
\bal
P_1&=(U_2\ox V_2)Q_1(U_2\ox V_2)^\dg,\\
P_3&=(U_2\ox V_2)Q_3(U_2\ox V_2)^\dg,
\eal
\eeq
\beq
\label{eq:pslu-3}
\bal
P_2&=(U_3\ox V_3)Q_2(U_3\ox V_3)^\dg,\\
P_3&=(U_3\ox V_3)Q_3(U_3\ox V_3)^\dg,
\eal
\eeq
with local unitary $U_j\ox V_j,~j=1,2,3$. By simultaneously applying any LU operation to $Q_i$'s, it follows that
\beq
\label{eq:pslu-4}
\bal
\tilde{Q}_1&=(U\ox V)Q_1(U\ox V)^\dg,\\
\tilde{Q}_2&=(U\ox V)Q_2(U\ox V)^\dg,\\
\tilde{Q}_3&=(U\ox V)Q_3(U\ox V)^\dg.
\eal
\eeq
Next, we show the two assertions as follows.

(i) By Eq. \eqref{eq:pslu-1} we have
\beq
\label{eq:pslu-1.1}
\bal
P_1&=(U_1U^\dg\ox V_1V^\dg)\tilde{Q}_1(U U_1^\dg\ox VV_1^\dg),\\
P_2&=(U_1U^\dg\ox V_1V^\dg)\tilde{Q}_2(U U_1^\dg\ox VV_1^\dg). 
\eal
\eeq
By Eq. \eqref{eq:pslu-1.1} 
we conclude that $(P_1,P_2)$ is SLU equivalent to $(\tilde{Q}_1,\tilde{Q}_2)$, namely $(P_1,P_2)\sim_s(\tilde{Q}_1,\tilde{Q}_2)$. Analagously, one can verify that $(P_1,P_3)\sim_s(\tilde{Q}_1,\tilde{Q}_3)$ by virtue of Eqs. \eqref{eq:pslu-2} and \eqref{eq:pslu-4}; and $(P_2,P_3)\sim_s(\tilde{Q}_2,\tilde{Q}_3)$ by virtue of Eqs. \eqref{eq:pslu-3} and \eqref{eq:pslu-4}. Then the assertion (i) holds.

(ii) We show the ``If'' part, and the ``Only if'' part can be verified similarly, due to the dual relation. Assume that $(P_1,P_2,P_3)\sim_s(\tilde{Q}_1,\tilde{Q}_2,\tilde{Q}_3)$. It means there exists a common LU operator $W\ox X$ such that $P_j=(W\ox X)\tilde{Q}_j(W\ox X)^\dg$ for each $j$. By Eq. \eqref{eq:pslu-4} we obtain
\beq
\label{eq:pslu-5}
\bal
P_1&=(WU\ox XV)Q_1(WU\ox XV)^\dg,\\
P_2&=(WU\ox XV)Q_2(WU\ox XV)^\dg,\\
P_3&=(WU\ox XV)Q_3(WU\ox XV)^\dg.
\eal
\eeq
It follows that $(P_1,P_2,P_3)\sim_s(Q_1,Q_2,Q_3)$. Then the assertion (ii) holds.

This completes the proof.
\qed


\bibliography{witness}

\begin{thebibliography}{34}%
\makeatletter
\providecommand \@ifxundefined [1]{%
 \@ifx{#1\undefined}
}%
\providecommand \@ifnum [1]{%
 \ifnum #1\expandafter \@firstoftwo
 \else \expandafter \@secondoftwo
 \fi
}%
\providecommand \@ifx [1]{%
 \ifx #1\expandafter \@firstoftwo
 \else \expandafter \@secondoftwo
 \fi
}%
\providecommand \natexlab [1]{#1}%
\providecommand \enquote  [1]{``#1''}%
\providecommand \bibnamefont  [1]{#1}%
\providecommand \bibfnamefont [1]{#1}%
\providecommand \citenamefont [1]{#1}%
\providecommand \href@noop [0]{\@secondoftwo}%
\providecommand \href [0]{\begingroup \@sanitize@url \@href}%
\providecommand \@href[1]{\@@startlink{#1}\@@href}%
\providecommand \@@href[1]{\endgroup#1\@@endlink}%
\providecommand \@sanitize@url [0]{\catcode `\\12\catcode `\$12\catcode `\&12\catcode `\#12\catcode `\^12\catcode `\_12\catcode `\%12\relax}%
\providecommand \@@startlink[1]{}%
\providecommand \@@endlink[0]{}%
\providecommand \url  [0]{\begingroup\@sanitize@url \@url }%
\providecommand \@url [1]{\endgroup\@href {#1}{\urlprefix }}%
\providecommand \urlprefix  [0]{URL }%
\providecommand \Eprint [0]{\href }%
\providecommand \doibase [0]{https://doi.org/}%
\providecommand \selectlanguage [0]{\@gobble}%
\providecommand \bibinfo  [0]{\@secondoftwo}%
\providecommand \bibfield  [0]{\@secondoftwo}%
\providecommand \translation [1]{[#1]}%
\providecommand \BibitemOpen [0]{}%
\providecommand \bibitemStop [0]{}%
\providecommand \bibitemNoStop [0]{.\EOS\space}%
\providecommand \EOS [0]{\spacefactor3000\relax}%
\providecommand \BibitemShut  [1]{\csname bibitem#1\endcsname}%
\let\auto@bib@innerbib\@empty
\bibitem [{\citenamefont {Kraus}(2010{\natexlab{a}})}]{pslu-2010}%
  \BibitemOpen
  \bibfield  {author} {\bibinfo {author} {\bibfnamefont {B.}~\bibnamefont {Kraus}},\ }\href {https://doi.org/10.1103/PhysRevA.82.032121} {\bibfield  {journal} {\bibinfo  {journal} {Phys. Rev. A}\ }\textbf {\bibinfo {volume} {82}},\ \bibinfo {pages} {032121} (\bibinfo {year} {2010}{\natexlab{a}})}\BibitemShut {NoStop}%
\bibitem [{\citenamefont {Dur}\ \emph {et~al.}(2000)\citenamefont {Dur}, \citenamefont {Vidal},\ and\ \citenamefont {Cirac}}]{3qubitinequiv2000}%
  \BibitemOpen
  \bibfield  {author} {\bibinfo {author} {\bibfnamefont {W.}~\bibnamefont {Dur}}, \bibinfo {author} {\bibfnamefont {G.}~\bibnamefont {Vidal}},\ and\ \bibinfo {author} {\bibfnamefont {J.~I.}\ \bibnamefont {Cirac}},\ }\href {https://doi.org/10.1103/PhysRevA.62.062314} {\bibfield  {journal} {\bibinfo  {journal} {Phys. Rev. A}\ }\textbf {\bibinfo {volume} {62}},\ \bibinfo {pages} {062314} (\bibinfo {year} {2000})}\BibitemShut {NoStop}%
\bibitem [{\citenamefont {Chen}\ and\ \citenamefont {Hayashi}(2011)}]{MCSLOCC2011}%
  \BibitemOpen
  \bibfield  {author} {\bibinfo {author} {\bibfnamefont {L.}~\bibnamefont {Chen}}\ and\ \bibinfo {author} {\bibfnamefont {M.}~\bibnamefont {Hayashi}},\ }\href {https://doi.org/10.1103/PhysRevA.83.022331} {\bibfield  {journal} {\bibinfo  {journal} {Phys. Rev. A}\ }\textbf {\bibinfo {volume} {83}},\ \bibinfo {pages} {022331} (\bibinfo {year} {2011})}\BibitemShut {NoStop}%
\bibitem [{\citenamefont {Brand\~ao}\ and\ \citenamefont {Gour}(2015)}]{rvqrt2015}%
  \BibitemOpen
  \bibfield  {author} {\bibinfo {author} {\bibfnamefont {F.~G. S.~L.}\ \bibnamefont {Brand\~ao}}\ and\ \bibinfo {author} {\bibfnamefont {G.}~\bibnamefont {Gour}},\ }\href {https://doi.org/10.1103/PhysRevLett.115.070503} {\bibfield  {journal} {\bibinfo  {journal} {Phys. Rev. Lett.}\ }\textbf {\bibinfo {volume} {115}},\ \bibinfo {pages} {070503} (\bibinfo {year} {2015})}\BibitemShut {NoStop}%
\bibitem [{\citenamefont {Marvian}(2022)}]{luqc-2022}%
  \BibitemOpen
  \bibfield  {author} {\bibinfo {author} {\bibfnamefont {I.}~\bibnamefont {Marvian}},\ }\href {https://doi.org/10.1038/s41567-021-01464-0} {\bibfield  {journal} {\bibinfo  {journal} {Nature Physics}\ }\textbf {\bibinfo {volume} {18}},\ \bibinfo {pages} {283} (\bibinfo {year} {2022})}\BibitemShut {NoStop}%
\bibitem [{\citenamefont {Piroli}\ \emph {et~al.}(2021)\citenamefont {Piroli}, \citenamefont {Styliaris},\ and\ \citenamefont {Cirac}}]{qcLOCC2021}%
  \BibitemOpen
  \bibfield  {author} {\bibinfo {author} {\bibfnamefont {L.}~\bibnamefont {Piroli}}, \bibinfo {author} {\bibfnamefont {G.}~\bibnamefont {Styliaris}},\ and\ \bibinfo {author} {\bibfnamefont {J.~I.}\ \bibnamefont {Cirac}},\ }\href {https://doi.org/10.1103/PhysRevLett.127.220503} {\bibfield  {journal} {\bibinfo  {journal} {Phys. Rev. Lett.}\ }\textbf {\bibinfo {volume} {127}},\ \bibinfo {pages} {220503} (\bibinfo {year} {2021})}\BibitemShut {NoStop}%
\bibitem [{\citenamefont {Oszmaniec}\ \emph {et~al.}(2016)\citenamefont {Oszmaniec}, \citenamefont {Augusiak}, \citenamefont {Gogolin}, \citenamefont {Ko\l{}ody\ifmmode~\acute{n}\else \'{n}\fi{}ski}, \citenamefont {Ac\'{\i}n},\ and\ \citenamefont {Lewenstein}}]{metrology2016}%
  \BibitemOpen
  \bibfield  {author} {\bibinfo {author} {\bibfnamefont {M.}~\bibnamefont {Oszmaniec}}, \bibinfo {author} {\bibfnamefont {R.}~\bibnamefont {Augusiak}}, \bibinfo {author} {\bibfnamefont {C.}~\bibnamefont {Gogolin}}, \bibinfo {author} {\bibfnamefont {J.}~\bibnamefont {Ko\l{}ody\ifmmode~\acute{n}\else \'{n}\fi{}ski}}, \bibinfo {author} {\bibfnamefont {A.}~\bibnamefont {Ac\'{\i}n}},\ and\ \bibinfo {author} {\bibfnamefont {M.}~\bibnamefont {Lewenstein}},\ }\href {https://doi.org/10.1103/PhysRevX.6.041044} {\bibfield  {journal} {\bibinfo  {journal} {Phys. Rev. X}\ }\textbf {\bibinfo {volume} {6}},\ \bibinfo {pages} {041044} (\bibinfo {year} {2016})}\BibitemShut {NoStop}%
\bibitem [{\citenamefont {Acin}\ \emph {et~al.}(2000)\citenamefont {Acin}, \citenamefont {Andrianov}, \citenamefont {Costa}, \citenamefont {Jane}, \citenamefont {Latorre},\ and\ \citenamefont {Tarrach}}]{3qubitlu2000}%
  \BibitemOpen
  \bibfield  {author} {\bibinfo {author} {\bibfnamefont {A.}~\bibnamefont {Acin}}, \bibinfo {author} {\bibfnamefont {A.}~\bibnamefont {Andrianov}}, \bibinfo {author} {\bibfnamefont {L.}~\bibnamefont {Costa}}, \bibinfo {author} {\bibfnamefont {E.}~\bibnamefont {Jane}}, \bibinfo {author} {\bibfnamefont {J.~I.}\ \bibnamefont {Latorre}},\ and\ \bibinfo {author} {\bibfnamefont {R.}~\bibnamefont {Tarrach}},\ }\href {https://doi.org/10.1103/PhysRevLett.85.1560} {\bibfield  {journal} {\bibinfo  {journal} {Phys. Rev. Lett.}\ }\textbf {\bibinfo {volume} {85}},\ \bibinfo {pages} {1560} (\bibinfo {year} {2000})}\BibitemShut {NoStop}%
\bibitem [{\citenamefont {Kraus}(2010{\natexlab{b}})}]{luequiv2010}%
  \BibitemOpen
  \bibfield  {author} {\bibinfo {author} {\bibfnamefont {B.}~\bibnamefont {Kraus}},\ }\href {https://doi.org/10.1103/PhysRevLett.104.020504} {\bibfield  {journal} {\bibinfo  {journal} {Phys. Rev. Lett.}\ }\textbf {\bibinfo {volume} {104}},\ \bibinfo {pages} {020504} (\bibinfo {year} {2010}{\natexlab{b}})}\BibitemShut {NoStop}%
\bibitem [{\citenamefont {Liu}\ \emph {et~al.}(2012)\citenamefont {Liu}, \citenamefont {Li}, \citenamefont {Li},\ and\ \citenamefont {Qiao}}]{mpsinequivlu2012}%
  \BibitemOpen
  \bibfield  {author} {\bibinfo {author} {\bibfnamefont {B.}~\bibnamefont {Liu}}, \bibinfo {author} {\bibfnamefont {J.-L.}\ \bibnamefont {Li}}, \bibinfo {author} {\bibfnamefont {X.}~\bibnamefont {Li}},\ and\ \bibinfo {author} {\bibfnamefont {C.-F.}\ \bibnamefont {Qiao}},\ }\href {https://doi.org/10.1103/PhysRevLett.108.050501} {\bibfield  {journal} {\bibinfo  {journal} {Phys. Rev. Lett.}\ }\textbf {\bibinfo {volume} {108}},\ \bibinfo {pages} {050501} (\bibinfo {year} {2012})}\BibitemShut {NoStop}%
\bibitem [{\citenamefont {Makhlin}(2002)}]{Makhlin2002}%
  \BibitemOpen
  \bibfield  {author} {\bibinfo {author} {\bibfnamefont {Y.}~\bibnamefont {Makhlin}},\ }\href {https://doi.org/10.1023/A:1022144002391} {\bibfield  {journal} {\bibinfo  {journal} {Quantum Information Processing}\ }\textbf {\bibinfo {volume} {1}},\ \bibinfo {pages} {243} (\bibinfo {year} {2002})}\BibitemShut {NoStop}%
\bibitem [{\citenamefont {Zhang}\ \emph {et~al.}(2013)\citenamefont {Zhang}, \citenamefont {Zhao}, \citenamefont {Li}, \citenamefont {Fei},\ and\ \citenamefont {Li-Jost}}]{msinv2013}%
  \BibitemOpen
  \bibfield  {author} {\bibinfo {author} {\bibfnamefont {T.-G.}\ \bibnamefont {Zhang}}, \bibinfo {author} {\bibfnamefont {M.-J.}\ \bibnamefont {Zhao}}, \bibinfo {author} {\bibfnamefont {M.}~\bibnamefont {Li}}, \bibinfo {author} {\bibfnamefont {S.-M.}\ \bibnamefont {Fei}},\ and\ \bibinfo {author} {\bibfnamefont {X.}~\bibnamefont {Li-Jost}},\ }\href {https://doi.org/10.1103/PhysRevA.88.042304} {\bibfield  {journal} {\bibinfo  {journal} {Phys. Rev. A}\ }\textbf {\bibinfo {volume} {88}},\ \bibinfo {pages} {042304} (\bibinfo {year} {2013})}\BibitemShut {NoStop}%
\bibitem [{\citenamefont {Li}\ \emph {et~al.}(2014)\citenamefont {Li}, \citenamefont {Zhang}, \citenamefont {Fei}, \citenamefont {Li-Jost},\ and\ \citenamefont {Jing}}]{mqbitinv2014}%
  \BibitemOpen
  \bibfield  {author} {\bibinfo {author} {\bibfnamefont {M.}~\bibnamefont {Li}}, \bibinfo {author} {\bibfnamefont {T.}~\bibnamefont {Zhang}}, \bibinfo {author} {\bibfnamefont {S.-M.}\ \bibnamefont {Fei}}, \bibinfo {author} {\bibfnamefont {X.}~\bibnamefont {Li-Jost}},\ and\ \bibinfo {author} {\bibfnamefont {N.}~\bibnamefont {Jing}},\ }\href {https://doi.org/10.1103/PhysRevA.89.062325} {\bibfield  {journal} {\bibinfo  {journal} {Phys. Rev. A}\ }\textbf {\bibinfo {volume} {89}},\ \bibinfo {pages} {062325} (\bibinfo {year} {2014})}\BibitemShut {NoStop}%
\bibitem [{\citenamefont {Martins}(2015)}]{mqbitiff2015}%
  \BibitemOpen
  \bibfield  {author} {\bibinfo {author} {\bibfnamefont {A.~M.}\ \bibnamefont {Martins}},\ }\href {https://doi.org/10.1103/PhysRevA.91.042308} {\bibfield  {journal} {\bibinfo  {journal} {Phys. Rev. A}\ }\textbf {\bibinfo {volume} {91}},\ \bibinfo {pages} {042308} (\bibinfo {year} {2015})}\BibitemShut {NoStop}%
\bibitem [{\citenamefont {Jing}\ \emph {et~al.}(2015)\citenamefont {Jing}, \citenamefont {Fei}, \citenamefont {Li}, \citenamefont {Li-Jost},\ and\ \citenamefont {Zhang}}]{mqbitinv2015}%
  \BibitemOpen
  \bibfield  {author} {\bibinfo {author} {\bibfnamefont {N.}~\bibnamefont {Jing}}, \bibinfo {author} {\bibfnamefont {S.-M.}\ \bibnamefont {Fei}}, \bibinfo {author} {\bibfnamefont {M.}~\bibnamefont {Li}}, \bibinfo {author} {\bibfnamefont {X.}~\bibnamefont {Li-Jost}},\ and\ \bibinfo {author} {\bibfnamefont {T.}~\bibnamefont {Zhang}},\ }\href {https://doi.org/10.1103/PhysRevA.92.022306} {\bibfield  {journal} {\bibinfo  {journal} {Phys. Rev. A}\ }\textbf {\bibinfo {volume} {92}},\ \bibinfo {pages} {022306} (\bibinfo {year} {2015})}\BibitemShut {NoStop}%
\bibitem [{\citenamefont {Sun}\ \emph {et~al.}(2017)\citenamefont {Sun}, \citenamefont {Fei},\ and\ \citenamefont {Wang}}]{3qbitinv2017}%
  \BibitemOpen
  \bibfield  {author} {\bibinfo {author} {\bibfnamefont {B.-Z.}\ \bibnamefont {Sun}}, \bibinfo {author} {\bibfnamefont {S.-M.}\ \bibnamefont {Fei}},\ and\ \bibinfo {author} {\bibfnamefont {Z.-X.}\ \bibnamefont {Wang}},\ }\href {https://doi.org/10.1038/s41598-017-04717-2} {\bibfield  {journal} {\bibinfo  {journal} {Scientific Reports}\ }\textbf {\bibinfo {volume} {7}},\ \bibinfo {pages} {4869} (\bibinfo {year} {2017})}\BibitemShut {NoStop}%
\bibitem [{\citenamefont {Zhou}\ \emph {et~al.}(2024)\citenamefont {Zhou}, \citenamefont {Zhen}, \citenamefont {Xu}, \citenamefont {Zhao}, \citenamefont {Yang}, \citenamefont {Fei}, \citenamefont {Li}, \citenamefont {Liu},\ and\ \citenamefont {Chen}}]{msinv2024}%
  \BibitemOpen
  \bibfield  {author} {\bibinfo {author} {\bibfnamefont {Q.}~\bibnamefont {Zhou}}, \bibinfo {author} {\bibfnamefont {Y.-Z.}\ \bibnamefont {Zhen}}, \bibinfo {author} {\bibfnamefont {X.-Y.}\ \bibnamefont {Xu}}, \bibinfo {author} {\bibfnamefont {S.}~\bibnamefont {Zhao}}, \bibinfo {author} {\bibfnamefont {W.-L.}\ \bibnamefont {Yang}}, \bibinfo {author} {\bibfnamefont {S.-M.}\ \bibnamefont {Fei}}, \bibinfo {author} {\bibfnamefont {L.}~\bibnamefont {Li}}, \bibinfo {author} {\bibfnamefont {N.-L.}\ \bibnamefont {Liu}},\ and\ \bibinfo {author} {\bibfnamefont {K.}~\bibnamefont {Chen}},\ }\href {https://doi.org/10.1103/PhysRevA.109.022427} {\bibfield  {journal} {\bibinfo  {journal} {Phys. Rev. A}\ }\textbf {\bibinfo {volume} {109}},\ \bibinfo {pages} {022427} (\bibinfo {year} {2024})}\BibitemShut {NoStop}%
\bibitem [{\citenamefont {Zhang}\ \emph {et~al.}(2024)\citenamefont {Zhang}, \citenamefont {Xie},\ and\ \citenamefont {Tao}}]{lininv2024}%
  \BibitemOpen
  \bibfield  {author} {\bibinfo {author} {\bibfnamefont {L.}~\bibnamefont {Zhang}}, \bibinfo {author} {\bibfnamefont {B.}~\bibnamefont {Xie}},\ and\ \bibinfo {author} {\bibfnamefont {Y.}~\bibnamefont {Tao}},\ }\href {https://arxiv.org/abs/2412.17237} {\bibinfo {title} {Local unitarty equivalence and entanglement by bargmann invariants}} (\bibinfo {year} {2024}),\ \Eprint {https://arxiv.org/abs/2412.17237} {arXiv:2412.17237 [quant-ph]} \BibitemShut {NoStop}%
\bibitem [{\citenamefont {Chen}\ and\ \citenamefont {Yu}(2015)}]{decugate15}%
  \BibitemOpen
  \bibfield  {author} {\bibinfo {author} {\bibfnamefont {L.}~\bibnamefont {Chen}}\ and\ \bibinfo {author} {\bibfnamefont {L.}~\bibnamefont {Yu}},\ }\href {https://doi.org/10.1103/PhysRevA.91.032308} {\bibfield  {journal} {\bibinfo  {journal} {Phys. Rev. A}\ }\textbf {\bibinfo {volume} {91}},\ \bibinfo {pages} {032308} (\bibinfo {year} {2015})}\BibitemShut {NoStop}%
\bibitem [{\citenamefont {Shen}\ \emph {et~al.}(2022)\citenamefont {Shen}, \citenamefont {Chen},\ and\ \citenamefont {Yu}}]{ugatesyi22}%
  \BibitemOpen
  \bibfield  {author} {\bibinfo {author} {\bibfnamefont {Y.}~\bibnamefont {Shen}}, \bibinfo {author} {\bibfnamefont {L.}~\bibnamefont {Chen}},\ and\ \bibinfo {author} {\bibfnamefont {L.}~\bibnamefont {Yu}},\ }\href {https://doi.org/10.1088/1751-8121/aca36b} {\bibfield  {journal} {\bibinfo  {journal} {Journal of Physics A: Mathematical and Theoretical}\ }\textbf {\bibinfo {volume} {55}},\ \bibinfo {pages} {465302} (\bibinfo {year} {2022})}\BibitemShut {NoStop}%
\bibitem [{\citenamefont {Chruściński}\ and\ \citenamefont {Sarbicki}(2014)}]{ewreview2014}%
  \BibitemOpen
  \bibfield  {author} {\bibinfo {author} {\bibfnamefont {D.}~\bibnamefont {Chruściński}}\ and\ \bibinfo {author} {\bibfnamefont {G.}~\bibnamefont {Sarbicki}},\ }\href {https://doi.org/10.1088/1751-8113/47/48/483001} {\bibfield  {journal} {\bibinfo  {journal} {Journal of Physics A: Mathematical and Theoretical}\ }\textbf {\bibinfo {volume} {47}},\ \bibinfo {pages} {483001} (\bibinfo {year} {2014})}\BibitemShut {NoStop}%
\bibitem [{\citenamefont {Shen}\ \emph {et~al.}(2024)\citenamefont {Shen}, \citenamefont {Chen},\ and\ \citenamefont {Bian}}]{rew2024}%
  \BibitemOpen
  \bibfield  {author} {\bibinfo {author} {\bibfnamefont {Y.}~\bibnamefont {Shen}}, \bibinfo {author} {\bibfnamefont {L.}~\bibnamefont {Chen}},\ and\ \bibinfo {author} {\bibfnamefont {Z.}~\bibnamefont {Bian}},\ }\href {https://arxiv.org/abs/2408.08574} {\bibinfo {title} {The detection power of real entanglement witnesses under local unitary equivalence}} (\bibinfo {year} {2024}),\ \Eprint {https://arxiv.org/abs/2408.08574} {arXiv:2408.08574 [quant-ph]} \BibitemShut {NoStop}%
\bibitem [{\citenamefont {Chen}\ and\ \citenamefont {Yu}(2016)}]{ep2016-2}%
  \BibitemOpen
  \bibfield  {author} {\bibinfo {author} {\bibfnamefont {L.}~\bibnamefont {Chen}}\ and\ \bibinfo {author} {\bibfnamefont {L.}~\bibnamefont {Yu}},\ }\href {https://doi.org/10.1103/PhysRevA.94.022307} {\bibfield  {journal} {\bibinfo  {journal} {Phys. Rev. A}\ }\textbf {\bibinfo {volume} {94}},\ \bibinfo {pages} {022307} (\bibinfo {year} {2016})}\BibitemShut {NoStop}%
\bibitem [{\citenamefont {Qiu}\ \emph {et~al.}(2025)\citenamefont {Qiu}, \citenamefont {Song},\ and\ \citenamefont {Chen}}]{mep2025}%
  \BibitemOpen
  \bibfield  {author} {\bibinfo {author} {\bibfnamefont {X.}~\bibnamefont {Qiu}}, \bibinfo {author} {\bibfnamefont {Z.}~\bibnamefont {Song}},\ and\ \bibinfo {author} {\bibfnamefont {L.}~\bibnamefont {Chen}},\ }\href {https://doi.org/10.1103/PhysRevA.111.022407} {\bibfield  {journal} {\bibinfo  {journal} {Phys. Rev. A}\ }\textbf {\bibinfo {volume} {111}},\ \bibinfo {pages} {022407} (\bibinfo {year} {2025})}\BibitemShut {NoStop}%
\bibitem [{\citenamefont {Liang}\ \emph {et~al.}(2024)\citenamefont {Liang}, \citenamefont {Yan}, \citenamefont {Si},\ and\ \citenamefont {Chen}}]{inertia24}%
  \BibitemOpen
  \bibfield  {author} {\bibinfo {author} {\bibfnamefont {Y.}~\bibnamefont {Liang}}, \bibinfo {author} {\bibfnamefont {J.}~\bibnamefont {Yan}}, \bibinfo {author} {\bibfnamefont {D.}~\bibnamefont {Si}},\ and\ \bibinfo {author} {\bibfnamefont {L.}~\bibnamefont {Chen}},\ }\href {https://doi.org/10.1088/1751-8121/ad3056} {\bibfield  {journal} {\bibinfo  {journal} {Journal of Physics A: Mathematical and Theoretical}\ }\textbf {\bibinfo {volume} {57}},\ \bibinfo {pages} {125203} (\bibinfo {year} {2024})}\BibitemShut {NoStop}%
\bibitem [{\citenamefont {Yu}\ \emph {et~al.}(2010)\citenamefont {Yu}, \citenamefont {Griffiths},\ and\ \citenamefont {Cohen}}]{eff-nug-2015}%
  \BibitemOpen
  \bibfield  {author} {\bibinfo {author} {\bibfnamefont {L.}~\bibnamefont {Yu}}, \bibinfo {author} {\bibfnamefont {R.~B.}\ \bibnamefont {Griffiths}},\ and\ \bibinfo {author} {\bibfnamefont {S.~M.}\ \bibnamefont {Cohen}},\ }\href {https://doi.org/10.1103/PhysRevA.81.062315} {\bibfield  {journal} {\bibinfo  {journal} {Phys. Rev. A}\ }\textbf {\bibinfo {volume} {81}},\ \bibinfo {pages} {062315} (\bibinfo {year} {2010})}\BibitemShut {NoStop}%
\bibitem [{\citenamefont {Amaro}\ and\ \citenamefont {M\"uller}(2020)}]{lewo2020}%
  \BibitemOpen
  \bibfield  {author} {\bibinfo {author} {\bibfnamefont {D.}~\bibnamefont {Amaro}}\ and\ \bibinfo {author} {\bibfnamefont {M.}~\bibnamefont {M\"uller}},\ }\href {https://doi.org/10.1103/PhysRevA.101.012317} {\bibfield  {journal} {\bibinfo  {journal} {Phys. Rev. A}\ }\textbf {\bibinfo {volume} {101}},\ \bibinfo {pages} {012317} (\bibinfo {year} {2020})}\BibitemShut {NoStop}%
\bibitem [{\citenamefont {Horodecki}(1997)}]{ppte1997}%
  \BibitemOpen
  \bibfield  {author} {\bibinfo {author} {\bibfnamefont {P.}~\bibnamefont {Horodecki}},\ }\href {https://doi.org/https://doi.org/10.1016/S0375-9601(97)00416-7} {\bibfield  {journal} {\bibinfo  {journal} {Physics Letters A}\ }\textbf {\bibinfo {volume} {232}},\ \bibinfo {pages} {333 } (\bibinfo {year} {1997})}\BibitemShut {NoStop}%
\bibitem [{\citenamefont {Folland}(1999)}]{ranalysis}%
  \BibitemOpen
  \bibfield  {author} {\bibinfo {author} {\bibfnamefont {G.~B.}\ \bibnamefont {Folland}},\ }\href@noop {} {\emph {\bibinfo {title} {Real Analysis: Modern Techniques and Their Applications}}},\ \bibinfo {edition} {2nd}\ ed.\ (\bibinfo  {publisher} {Wiley},\ \bibinfo {address} {USA},\ \bibinfo {year} {1999})\BibitemShut {NoStop}%
\bibitem [{\citenamefont {Rana}(2013)}]{Neigenv2013}%
  \BibitemOpen
  \bibfield  {author} {\bibinfo {author} {\bibfnamefont {S.}~\bibnamefont {Rana}},\ }\href {https://doi.org/10.1103/PhysRevA.87.054301} {\bibfield  {journal} {\bibinfo  {journal} {Phys. Rev. A}\ }\textbf {\bibinfo {volume} {87}},\ \bibinfo {pages} {054301} (\bibinfo {year} {2013})}\BibitemShut {NoStop}%
\bibitem [{\citenamefont {Bartlett}\ \emph {et~al.}(2007)\citenamefont {Bartlett}, \citenamefont {Rudolph},\ and\ \citenamefont {Spekkens}}]{ssr2007}%
  \BibitemOpen
  \bibfield  {author} {\bibinfo {author} {\bibfnamefont {S.~D.}\ \bibnamefont {Bartlett}}, \bibinfo {author} {\bibfnamefont {T.}~\bibnamefont {Rudolph}},\ and\ \bibinfo {author} {\bibfnamefont {R.~W.}\ \bibnamefont {Spekkens}},\ }\href {https://doi.org/10.1103/RevModPhys.79.555} {\bibfield  {journal} {\bibinfo  {journal} {Rev. Mod. Phys.}\ }\textbf {\bibinfo {volume} {79}},\ \bibinfo {pages} {555} (\bibinfo {year} {2007})}\BibitemShut {NoStop}%
\bibitem [{\citenamefont {Gour}\ and\ \citenamefont {Spekkens}(2008)}]{rfgour2008}%
  \BibitemOpen
  \bibfield  {author} {\bibinfo {author} {\bibfnamefont {G.}~\bibnamefont {Gour}}\ and\ \bibinfo {author} {\bibfnamefont {R.~W.}\ \bibnamefont {Spekkens}},\ }\href {https://doi.org/10.1088/1367-2630/10/3/033023} {\bibfield  {journal} {\bibinfo  {journal} {New J. Phys.}\ }\textbf {\bibinfo {volume} {10}},\ \bibinfo {pages} {033023} (\bibinfo {year} {2008})}\BibitemShut {NoStop}%
\bibitem [{\citenamefont {Gour}\ \emph {et~al.}(2009)\citenamefont {Gour}, \citenamefont {Marvian},\ and\ \citenamefont {Spekkens}}]{gour2009}%
  \BibitemOpen
  \bibfield  {author} {\bibinfo {author} {\bibfnamefont {G.}~\bibnamefont {Gour}}, \bibinfo {author} {\bibfnamefont {I.}~\bibnamefont {Marvian}},\ and\ \bibinfo {author} {\bibfnamefont {R.~W.}\ \bibnamefont {Spekkens}},\ }\href {https://doi.org/10.1103/PhysRevA.80.012307} {\bibfield  {journal} {\bibinfo  {journal} {Phys. Rev. A}\ }\textbf {\bibinfo {volume} {80}},\ \bibinfo {pages} {012307} (\bibinfo {year} {2009})}\BibitemShut {NoStop}%
\bibitem [{\citenamefont {Horn}\ and\ \citenamefont {Johnson}(2012)}]{bookmatrix}%
  \BibitemOpen
  \bibfield  {author} {\bibinfo {author} {\bibfnamefont {R.~A.}\ \bibnamefont {Horn}}\ and\ \bibinfo {author} {\bibfnamefont {C.~R.}\ \bibnamefont {Johnson}},\ }\href@noop {} {\emph {\bibinfo {title} {Matrix Analysis}}},\ \bibinfo {edition} {2nd}\ ed.\ (\bibinfo  {publisher} {Cambridge University Press},\ \bibinfo {address} {USA},\ \bibinfo {year} {2012})\BibitemShut {NoStop}%
\end{thebibliography}%

\end{document}